\titlespacing*{\section}{0pt}{1.5ex plus 1ex minus .2ex}{1ex plus .2ex}
\DeclareMathOperator*{\argmin}{argmin}
\newtheorem{theorem}{Theorem}
\newtheorem{assumption}{Assumption}
\newtheorem{lemma}{Lemma}
\title{A Synthetic Business Cycle Approach to Counterfactual Analysis with Nonstationary Macroeconomic Data\thanks{Authors are listed in alphabetical order. We thank participants of the CUHK Econometrics Workshop for their valuable comments.}}
\author{Zhentao Shi\thanks{The Chinese University of Hong Kong. Email: \url{zhentao.shi@cuhk.edu.hk}. 
Shi acknowledges the partial financial support from the National Natural Science Foundation of China (Project No.~72425007).
} \and Jin Xi\thanks{Corresponding author. AMSS Center for Forecasting Sciences, Chinese Academy of Sciences.
Address: No.~55 Zhongguancun East Road, Academy of Mathematics and Systems Science, Chinese Academy of Sciences, Beijing 100190, China.
Email: \url{xijin@amss.ac.cn}.} \and Haitian Xie\thanks{Peking University. Email: \url{xht@gsm.pku.edu.cn}. Xie gratefully acknowledges the financial support from the National Natural Science Foundation of China (No.~72403008, No.~72495123).}}
\date{\today}
\begin{document}

\maketitle

\begin{abstract}
\singlespacing

This paper investigates the use of synthetic control methods for causal inference in macroeconomic settings when dealing with possibly nonstationary data. While the synthetic control approach has gained popularity for estimating counterfactual outcomes, we caution researchers against assuming a common nonstationary trend factor across units for macroeconomic outcomes, as doing so may result in misleading causal estimation—a pitfall we refer to as the spurious synthetic control problem. To address this issue, we propose a synthetic business cycle framework that explicitly separates trend and cyclical components. By leveraging the treated unit's historical data to forecast its trend and using control units only for cyclical fluctuations, our divide-and-conquer strategy eliminates spurious correlations and improves the robustness of counterfactual prediction in macroeconomic applications. As empirical illustrations, we examine the cases of German reunification and the handover of Hong Kong, demonstrating the advantages of the proposed approach.

\bigskip \noindent \textbf{Keywords}: Business Cycle, Nonstationarity, Spurious Regression, Synthetic Control.

\bigskip \noindent \textbf{JEL codes:} C22, C23, E32
\end{abstract}

\newpage

\section{Introduction}

Milestone socio-economic events, such as the German reunification and the handover of Hong Kong
(where one of the coauthors once studied and another currently resides), 
are defining moments of policy reforms, economic shocks, and geopolitical changes at macro-level units (e.g., countries, states, or regions).
Assessing the impacts of events of such scales can only be carried out by observational data, instead of randomized controlled trials. 

A prominent approach in this domain is the synthetic control method \citep{abadie2003economic, abadie2010synthetic}, which estimates the untreated potential outcome of the treated unit by constructing a weighted average of control units' outcomes. The weights are typically derived by minimizing the discrepancy between the treated unit's pre-treatment outcomes and those of the control units.\footnote{Throughout the paper, we use the terms ``control units'' and ``donor units'' interchangeably.} 

Although widely used, the synthetic control method has received limited formal attention regarding issues posed by nonstationary data.\footnote{See the literature review part for a discussion of the few existing contributions.} This gap is surprising, given that empirical studies frequently apply the synthetic control method to potentially nonstationary macroeconomic variables, including GDP \citep{abadie2003economic,billmeier2013assessing,abadie2015comparative}, gross state product \citep{ben2021augmented}, government spending \citep{eliason2018can,asatryan2018balanced}, exchange rate \citep{chamon2017fx}, unemployment \citep{gobillon2016regional,peri2019labor}, gasoline price \citep{becker2021price}, asset price \citep{acemoglu2016value}, carbon emission \citep{andersson2019carbon}.\footnote{These variables are often regarded as potentially nonstationary. See, for example, \cite{mccracken2016fred} for discussions on first-differencing and other transformations commonly used to mitigate the concerns of nonstationarity in these variables.
} 

This paper cautions empirical researchers who attempt to apply the synthetic approach to examine causal effects using potentially nonstationary macroeconomic datasets. 
Not only would the causal conclusions be invalid when nonstationarity is not pervasive across units, but researchers might also empirically observe that control units provide seemingly good estimates of the treated unit prior to the intervention, even if the control units are entirely unrelated to the treated unit and lack any predictive power after the treatment period. This phenomenon of spurious relationships was well-documented in the context of linear regression \citep{granger1974spurious,phillips1986understanding}, and later was extended to factor models \citep{onatski2021spurious}.

When analyzing nonstationary data, it is common economic practice to decompose a nonstationary time series into two distinct components: a trend component and a cyclical component 
\citep{beveridge1981new, hodrick1997postwar}.
 The trend component reflects long-term economic growth or decline, driven largely by the unit's intrinsic characteristics such as institutional quality, technological progress, labor force growth, and structural economic conditions. For example, the steady rise in GDP per capita in advanced economies over the past century can be attributed to factors like innovation, human capital accumulation, and institutional development, which shape the trend component. In contrast, the cyclical component captures short- to medium-term fluctuations associated with the business cycle, which are influenced by common factors such as global economic conditions, commodity price shocks, or financial market dynamics. 

By isolating components driven by fundamentally different forces, the trend-cycle decomposition provides a new perspective to address nonstationarity within the synthetic control framework.
The key premise of the synthetic control approach is that the untreated potential outcomes of both control and treated units share a common underlying structure that can be characterized by a factor model, as specified by \cite{abadie2010synthetic}:
\begin{align} \label{eqn:old factor characterization}
    Y_{i,t}(0) = \lambda_i f_t + \epsilon_{i,t},
\end{align}
where $\lambda_i$ and $f_t$ respectively represent individual factor loadings and the vector of common factors, and $\epsilon_{i,t}$ denotes idiosyncratic terms. This naturally leads to the question: Which component—--the trend, the cycle, or both---is compatible with the synthetic control framework?

Business cycles' comovement or synchronization across countries is a well-documented phenomenon in the economic literature, reflecting the financial and firm-level linkages of economies \citep{di2018micro,avila2024common}. 
For example, the recession of 2008-2009 led to synchronized downturns across many economies, illustrating the shared nature of their cyclical components, as specified in (\ref{eqn:old factor characterization}).
Nevertheless, imposing a common nonstationary trend across units can be problematic.
The reasons for this are twofold. First, as previously discussed, the longer-term dynamics of each country could be primarily determined by its intrinsic factors rather than shared features across countries. Second, when the nonstationarity of the treated unit  cannot be captured by the control units, imputing its potential outcomes using a weighted average of the control units’ outcomes can yield misleading results due to spurious regression. This is because the imputation is essentially based on linearly regressing the treated unit's potential outcomes on those of the control units. As a result, even if the two groups are entirely unrelated in their trends, researchers may still observe the empirically significant explanatory power before the treatment occurs. This issue of spurious synthetic control will be further discussed in Section \ref{sec:conventional}.

To address this challenge, we propose a new method that distinguishes between the trend and cyclical components within the synthetic control framework. Our approach uses the treated unit's own historical data to forecast its trend component, both because the high persistence of the trend makes it predictable from its own past values, and because this component is mostly driven by unit-specific characteristics that are not easily captured by control units.
Meanwhile, we utilize the cyclical components of the control units to construct synthetic predictions for the treated unit's cyclical behavior, as these components are likely to be influenced by common short-term factors shared across units. We refer to this new approach as the \emph{synthetic business cycle approach} (SBC). 

By modeling the trend and cyclical dynamics separately, our divide-and-conquer strategy provides a robust and valid framework for estimating counterfactual outcomes. It strengthens the validity of causal inferences by addressing the spurious synthetic control problem, offering a reliable tool for causally analyzing macroeconomic events. We formally show that our estimator provides an asymptotically unbiased estimate of the counterfactual outcome given the trend-cycle decomposition.
For counterfactual prediction in the nonstationary world, this is the first procedure, to the best of our knowledge, that is robust regardless of whether cointegration is present or not; it provides agnostic practitioners with an easy-to-implement algorithm supported by theoretical justifications.

We proceed with extensive numerical work to showcase the finite-sample performance of our method. 
Our simulation results show that when the correlation between the treated unit and the control units is spurious, the proposed synthetic business cycle estimator achieves substantial reductions in mean-squared error (MSE) for counterfactual prediction. When some donors are cointegrated with the treated unit---a scenario in which the conventional method should perform well---our estimator maintains strong performance. 

For our empirical applications, we revisit two classic applications of the synthetic control method: German reunification and Hong Kong’s 1997 return, both focusing on the nonstationary outcome GDP per capita. Although both methods fit the pre-treatment data well, their post-treatment counterfactuals diverge. The synthetic weights for the trend and cycle components select different donor groups, and the conventional synthetic weights lean heavily toward the trend, producing the discrepancy. Placebo tests confirm the robustness of the proposed estimator, whereas the conventional estimator diverges from the actual data.

\paragraph{Literature review.}
This study is based on multiple strands of literature. 
Synthetic control is arguably the most popular framework of counterfactual prediction with observational data, which has witnessed generalization and explorations in all directions, for example
\cite{ben2021augmented},
\cite{doudchenko2016balancing},
\cite{botosaru2019role},
and
\cite{carvalho2018arco},
to name a few. 
The key idea of synthetic control is to leverage the control units in the donor pool to predict the counterfactual after the event happens, and therefore it naturally involves the time dimension as well as the cross section, i.e., a panel data structure. \emph{Panel data approach for program evaluation}, initiated by \cite{hsiao2012panel}, and is followed by 
\cite{hou2021revisiting},
\cite{li2024confidence},
and 
\cite{shi2023forward}, etc, is a cousin of synthetic control. 
When the panel data is split into pre-treatment / post-treatment and treated / control quadrants, the counterfactual prediction exercise is associated with the statistical problem of \emph{matrix completion} 
\citep{athey2021matrix,bai2021matrix, xiong2023large}.
We will present our model in the matrix completion perspective with outcome variables alone; other covariates are abstracted from the context.

The only papers, as we are aware of, that theorize the counterfactual prediction in the nonstationary  environment are \cite{bai2014property}, \cite{masini2022counterfactual}
and \cite{masini2021counterfactual}.
The basic procedures of these three studies are similar: directly using a regression to project the target time series onto the nonstationary multivariate time series of the donor pool. 
The first and the second papers work under the low-dimensional environment and therefore adopt the OLS estimation, whereas the third one extends to a diverging number of regressors and thus regularizes the linear regression with sparsity. 
The direct regression approach works well in the presence of cointegration by finding the cointegration vector.
\cite{bai2014property} assumes nonstationary factors where spurious regression is ruled out, while
\citet[p.232]{masini2022counterfactual} acknowledges ``\textit{As a consequence, in this case (of cointegration) no method based on linear combination of peer will give a reliable counterfactual,}'' and 
\citet[p.1779]{masini2021counterfactual} suggests ``\textit{To avoid spurious results, the practitioner can pretest for cointegration.}''
Pre-testing errors, however,  will be carried into the procedure and thus distort the estimator.

Faced with the challenges imposed by potential spurious regressions, we tackle nonstationarity with a filtering procedure, and then the remaining stationary components are further recast into the synthetic control framework. 
In the language of machine learning, the filtering step carries out \emph{feature engineering} that pre-processes the nonstationary time series, which is difficult to handle via regressions at the risk of spurious regressions, whereas the weighted linear combination step is standard supervised learning.  
In terms of algorithm design, our method takes a distinctive strategy from the three papers referred to in the last paragraph, and therefore is new in the literature.

Recently, \citet{shen2023same} showed a surprising result: vertical regression (synthetic control–type methods), horizontal regression, and synthetic difference-in-differences\footnote{Horizontal regression first regresses each control unit’s post-treatment outcomes on its own pre-treatment history, then applies the estimated model to the treated unit to predict its counterfactual outcome \citep{athey2021matrix}. The synthetic difference-in-differences estimator introduced by \citet{arkhangelsky2021synthetic} integrates both vertical and horizontal regression components.} yield numerically identical counterfactual predictions, up to the effect of penalty terms. This suggests that these popular approaches for panel-data causal analysis handle the time and unit dimensions symmetrically. However, such symmetry is suboptimal in nonstationary settings, as it overlooks the strong temporal predictability inherent in these data. In contrast to those above procedures, our procedure is intentionally asymmetric in time and unit, leveraging the temporal structure and the persistence of nonstationarity to improve counterfactual prediction.


\paragraph{Organization.} 
The rest of the paper is organized as follows. 
Section \ref{sec:conventional} introduces the panel causal inference framework, outlines the conventional synthetic control procedure, and discusses the spurious synthetic control problem. Section \ref{sec:SBC} presents the synthetic business cycle approach and derives its theoretical properties. Section \ref{sec:simulation} reports the simulation studies. Section \ref{sec:empirical} applies the proposed method to two empirical settings, German reunification and the return of Hong Kong. Section \ref{sec:conclusion} concludes.

\section{Synthetic Control Method} \label{sec:conventional}

\subsection{Setup and Procedure}

The standard synthetic control works in a dataset of a panel structure.
Consider a panel of $N+1$ units, $i=1,\dots,N+1$, observed over $t=1,\dots,T$. Without loss of generality, let the first unit $i=1$ be the treated unit, receiving treatment from period $T_0+1$ onward for some $1 < T_0 \leq T$, while the other units $i=2,\dots,N+1$ remain untreated throughout. Although this setup focuses on a single treated unit, the framework extends directly to multiple treated units by applying the method separately to each treated unit \citep{abadie2021using}.

We are interested in identifying and estimating the effect of the treatment on some outcome variable $Y$ of the treated unit. In the potential outcome notation, the treated unit has a pair of potential outcomes $Y_{1,t}(1)$ and $Y_{1,t}(0)$ at each period $t$. We observe $Y_{1,t} = Y_{1,t}(0)$ for pre-treatment periods $t \leq T_0$, whereas $Y_{1,t} = Y_{1,t}(1)$ for post-treatment periods $t > T_0$. For the untreated units, the potential outcome $Y_{i,t}(0)$ is observed for all time periods. A treated potential $Y_{i,t}(1)$ need not be defined for these control units, as in many applications it is not meaningful to conceptualize what their outcomes would have been under the treatment.  

The treatment effect on the treated unit at a post-treatment period $t$ is:
\begin{align*}
    Y_{1,t}(1) - Y_{1,t}(0), \quad t > T_0.
\end{align*}
Since $Y_{1,t}(1)$ is observed for all $t > T_0$, causal inference is reduced to imputing the unobserved counterfactual $Y_{1,t}(0)$ in the post-treatment periods. Equivalently, one can view the panel-data causal inference problem as a matrix-completion task \citep{athey2021matrix}. In particular, letting the rows index units $i=1,\dots,N+1$ and the columns index periods $t=1,\dots,T$, the matrix of untreated potential outcomes,
\[
\begin{pmatrix}
\checkmark & \checkmark & ?        & \dots & ?\\
\checkmark & \checkmark & \checkmark & \dots & \checkmark\\
\checkmark & \checkmark & \checkmark & \dots & \checkmark\\
\vdots     & \vdots     & \vdots     & \ddots& \vdots\\
\checkmark & \checkmark & \checkmark & \dots & \checkmark
\end{pmatrix}%
\smash{\raisebox{8ex}{\;\text{← treated unit}}}
\]
is fully observed except for the entries corresponding to $i=1$ and $t > T_0$, which must be imputed.

The standard synthetic control approach method designed by \cite{abadie2003economic} and the subsequent papers imputes the unobserved post-treatment $Y_{1,t}$ by weighting the control units' outcomes as
\begin{align*}
\tilde{Y}_{1,t}(0) \equiv \sum_{i=2}^{N+1} \tilde{w}_{i} Y_{i,t}, 
\quad t \geq T_0 +1,
\end{align*}
where the weights are obtained by fitting the pre-treatment outcomes 
\begin{align*}
 (\tilde{w}_2,\cdots,\tilde{w}_{N+1}) \equiv \argmin_{(w_2,\cdots,w_{N+1})} \sum_{t \leq T_0} \left( Y_{1,t} - \sum_{i=2}^{N+1} w_{i} Y_{i,t} \right)^2 \text{ subject to } \tilde{w}_i \geq 0, \sum_{i=2}^{N+1} w_{i}=1.
\end{align*}
More recent literature frames the synthetic control within the broader vertical regression paradigm\citep{hsiao2012panel,doudchenko2016balancing,athey2021matrix}. Specifically, the vertical regression method imputes the missing outcome $Y_{1,t}(0)$ in the post-treatment periods as
\begin{align*}
    \hat{Y}_{1,t}(0) \equiv \hat{\gamma}_1 + \sum_{i=2}^{N+1} \hat{\gamma}_{i} Y_{i,t}, \quad  t \geq T_0 +1,
\end{align*}
where
\begin{align*}
    (\hat{\gamma}_2,\cdots,\hat{\gamma}_{N+1}) \equiv \argmin_{(\gamma_2,\cdots,\gamma_{N+1})} \sum_{t \leq T_0} \left( Y_{1,t} - \gamma_1 - \sum_{i=2}^{N+1} \gamma_{i} Y_{i,t} \right)^2.
\end{align*}
The vertical regression allows for a non-zero intercept and negative weights.

The intuition behind the synthetic control estimator is the principle that ``similar units behave similarly'' \citep{shen2023same}. For example, \cite{abadie2010synthetic} and many subsequent studies motivate the method using a common factor structure for the untreated potential outcomes. When all units are influenced by the same latent factors, a suitably weighted combination of untreated units can approximate the trajectory of the treated unit. This principle makes synthetic control an intuitively appealing and transparent approach to causal inference in settings with long panel data.

\subsection{Nonstationarity and the Spurious Synthetic Control Problem}


When the outcomes are nonstationary, the ``similar units behave similarly'' justification for synthetic control is under threat due to the well-known problem of spurious regression \citep{granger1974spurious,phillips1986understanding}. Concretely, if each untreated potential outcome $Y_{i,t}(0)$ follows an independent nonstationary (e.g., unit-root) process, then using control units’ outcomes to impute the treated unit’s counterfactual is invalid. Yet, a vertical regression of $Y_{1,t}$ on the control outcomes over the pre-treatment period will often produce statistically significant coefficients, an artifact of spurious comovement rather than genuine shared factors.

Imposing nonnegativity and unit-sum constraints on the weights, as in standard synthetic control, does not resolve the issue. Those constraints narrow the feasible set of weights but do not eliminate the ability of independent nonstationary series to falsely explain one another. Spurious comovement, therefore, remains a serious concern whenever synthetic control is applied to nonstationary outcomes. These several issues will be demonstrated in the simulation results reported in Section~\ref{sec:simulation}.

As a practical illustration, consider researchers analyzing the GDP trajectories of multiple countries. Even if a country’s GDP can be closely approximated by a weighted average of others over a given period, such a fit may arise purely from coincidental trending behavior. There is no reason to expect this relationship to persist out of sample, and, therefore, it cannot be used to impute the treated country’s unobserved potential GDP. This is the spurious synthetic control problem.

Another concern is that, although the factor model heuristic motivates the synthetic control method, the procedure does not explicitly estimate the underlying latent factors. As a result,
researchers often neglect to examine whether the idiosyncratic term $\varepsilon_{it}$ in (\ref{eqn:c-factor}) is stationary or not, posing the risk of spurious synthetic control.
In response to this challenge, we propose a method that is robust to the presence of spurious relationships in the following section,

\section{Synthetic Business Cycle} \label{sec:SBC}

\subsection{Procedure}

We decompose the untreated potential outcome into two parts: a trend component $\tau_{i,t}$ and a cyclical component $c_{i,t}$:
\begin{align*}
    Y_{i,t}(0) & = \tau_{i,t} + c_{i,t}.
\end{align*}
The trend component captures the nonstationary part of the potential outcome, reflecting each unit's long-term evolution. The cyclical component is stationary and captures short‑term fluctuations that dissipate over time, such as recessions and expansions. Such a trend–cycle decomposition is a classic tool in macroeconomic analysis \citep{stock1988variable,stock1999business}.

Building on the trend–cycle decomposition of the potential outcome, we propose a synthetic business cycle approach for imputing $Y_{1,t}(0)$ in the post-treatment periods. 
This method follows a ``divide-and-conquer'' principle, as explained below.
\begin{itemize}[label={}, leftmargin=*, align=left]
    \item[\textbf{Step 1.}] In the pre-treatment period, apply a filter to decompose each unit's outcome series into its trend and cyclical components.
    \item[\textbf{Step 2.}] Use the treated unit's pre-treatment outcomes to forecast its post-treatment trend.
    \item[\textbf{Step 3.}] Use the cyclical components of the control units to construct a synthetic cyclical component for the treated unit.
    \item[\textbf{Step 4.}] Combine the predicted trend and synthetic cyclical component to obtain \(\hat{Y}_{1,t}(0)\) in the post-treatment periods.
\end{itemize}

The detailed procedure is operationalized below. In the first step, we apply the \cite{hamilton2018you} filter to detrend each series and obtain $\hat{\tau}_{i,t}$ and $\hat{c}_{i,t}$ for $t \leq T_0$ and $ 1 \leq i \leq N+1$. We discuss the rationale of choosing the Hamilton filter and compare it with alternative filtering methods in Section \ref{sec:filters}. \cite{hamilton2018you} proposed defining the cyclical component of a variable as the error made when forecasting its value at date $t$ using a linear function of $p$ of its own values observed up to date $t-h$, where $h$ is the forecasting horizon, typically recommended to be two to four years. The forecast error has a natural interpretation: it reflects the component that cannot be anticipated $h$ periods in advance, typically due to cyclical factors such as recessions.

Mathematically, Hamilton's approach defines the trend and cyclical components as
\begin{align} \label{eqn:tau-projection}
    \tau_{i,t} &\equiv {\mathbb{P}}(Y_{i,t}(0)|1,Y_{i,t-h}(0),Y_{i,t-h-1}(0),\dots
,Y_{i,t-h-p+1}(0)) \nonumber \\ 
& \equiv \alpha_{i,0} + \alpha_{i,1}Y_{i,t-h}(0)+\alpha_{i2}Y_{i,t-h-1}(0)+\dots
+\alpha_{i,p}Y_{i,t-h-p+1}(0), \\  
{c}_{i,t}& \equiv Y_{i,t}(0) - \tau_{i,t}, \nonumber
\end{align}
where $\mathbb{P}(y|x)$ denotes the population linear projection of $y$ on $x$, $(\alpha_{i,0},\cdots,\alpha_{i,p})$ are the corresponding projection coefficients, $h$ is the forecasting horizon, and $p$ is the number of self-lags used for prediction. The estimates $\hat{\tau}_{i,t}$ and $\hat{c}_{i,t}$ are obtained from the sample analogue of (\ref{eqn:tau-projection}). Specifically, we linearly project the pre-treatment $Y_{i,t}$ on a constant and its own lags to obtain the coefficients $(\hat{\alpha}_{i,0}, \dots, \hat{\alpha}_{i,p})$, which yield the fitted value $\hat{\tau}_{i,t}$ and hence the residual $\hat{c}_{i,t}$.

In the second step, we use the treated unit's lags prior to $T_0$ to form predictions of its post-treatment trend $(\hat{\tau}_{1,T_0+1}, \hat{\tau}_{1,T_0+2},\cdots, \hat{\tau}_{1,T_0+h})$ had the treatment never happened:
\begin{align*}
    \hat{\tau}_{1,t} \equiv \hat{\alpha}_{1,1} Y_{1,t-h} + \cdots + \hat{\alpha}_{1,p} Y_{1,t-h-p+1},\quad  T_0 +1 \leq  t \leq T_0+h.
\end{align*}

In the third step, we use the synthetic control method to impute the cyclical component of the treated unit using the control units: 
\begin{align*}
    \hat{c}_{1,t} \equiv \sum_{i=2}^{N+1} \hat{w}_{i} \hat{c}_{i,t}, \quad  t \geq T_0 + 1,
\end{align*}
where the weights are obtained by matching the pre-treatment cycles:
\begin{align} \label{eqn:cycle-weights}
 (\hat{w}_2,\cdots,\hat{w}_{N+1}) \equiv \argmin_{(w_2,\cdots,w_{N+1})} \sum_{t \leq T_0} \left( \hat{c}_{1,t} - \sum_{i=2}^{N+1} w_{i} \hat{c}_{i,t} \right)^2.
\end{align}
Note that an intercept term is not needed here, as the cyclical components are mean zero by construction.

\begin{figure}[t]
    \centering
    \includegraphics[width = 0.9\textwidth]{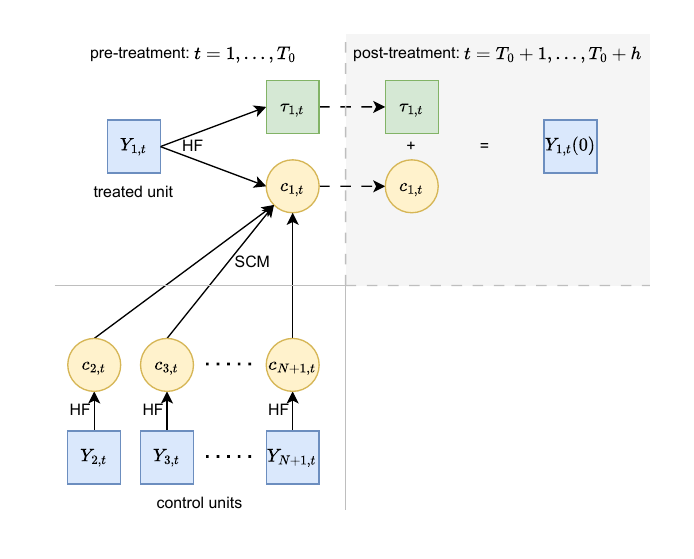}
    \caption{Diagram of the procedure}
    \label{fig:diag}
    \caption*{\footnotesize 
    The inputs of the learning process is $Y_{i,t}$ for $i=1,2,\ldots,N+1$ and $t = 1, \ldots, T_0$. The output is an estimator of the counterfactual $(Y_{1,t}(0))_{t=T_0+1}^{T_0+h}$.
    Objects in squares are nonstationary, whereas those in circles are stationary. The pre-treatment / post-treatment and treated / control combinations partition the plane into four quadrants, with the top-right one for counterfactual. The Hamilton filter (HF) is applied to each univariate time series separately for the trend-cycle decomposition.
    Each trend is kept standalone to be immune from spurious regression,
    while $c_{1,t+h}$ is supervised-learned by the cycles of the control units.}
\end{figure}

Lastly, we combine the trend and cyclical components to construct the imputed untreated potential outcome:
\begin{align*}
    \hat{Y}_{1,t}(0) \equiv \hat{\tau}_{1,t} + \hat{c}_{1,t},\  T_0 + 1 \leq t \leq T_0 + h.
\end{align*}
The estimated causal effect is then given by $Y_{1,t} - \hat{Y}_{1,t}(0)$. Figure \ref{fig:diag} presents a flowchart of the synthetic business cycle procedure.

The synthetic business cycle estimator offers several key benefits. First, by detrending each series to isolate a stationary cyclical component, it directly tackles spurious comovement in nonstationary data, as our theoretical results demonstrate. Second, it integrates seamlessly with the common-factor framework: even when idiosyncratic trends follow a factor structure, our detrending yields asymptotically unbiased estimates with small efficiency loss given a sufficiently long pre-treatment period. Third, it builds naturally on the conventional synthetic control method, preserving its intuitive weighting scheme and interpretability, while enhancing robustness in macroeconomic applications.

\subsection{Theoretical Results}

Following the literature \citep[e.g.,][]{
hsiao2012panel, masini2022counterfactual}, we adopt a ``fixed-$n$, large-$T$'' asymptotic framework. Below, we make formal assumptions about the common factor structure of the cyclical components and high-level assumptions regarding the estimation accuracy of the detrending filter.

\begin{assumption} \label{asm:c-factor}
    For each unit $i$, the cyclical component $c_{i,t}$ is weakly stationary and admits the following factor structure:
    \begin{align}
    c_{i,t} & = \lambda_i f_t + \varepsilon_{i,t}, \label{eqn:c-factor}
\end{align}
where $f_t \equiv (f_{1,t},\cdots,f_{L,t})'$ is a $L \times 1$ vector of stationary factors, $\lambda_i \equiv (\lambda_{i,1},\cdots,\lambda_{i,L})$ represents the corresponding deterministic loadings, and $\varepsilon_{i,t}$ is the idiosyncratic error that is mean zero with finite second moment, and is independent over $i$ and $t$. The factors satisfy the following conditions: 
\begin{enumerate}[label=(\roman*)]
        \item There exists a constant $\overline{F}$ such that $|f_{l,t}| \leq \overline{F}$ for all $l = 1,\cdots L$ and $t = 1,\cdots,T_0$. 
        \item Let $\xi(M)$ be the smallest eigenvalue of $\frac{1}{M} \sum_{t = T_0 - M +1}^{T_0} f_t'f_t$. $\xi(M)$ is bounded away from zero for any positive integer $M$: $\underset{M \in \mathbb{Z}^{+}}{\inf} \xi(M) \geq \underline{\xi} >0$. 
        \item $\frac{1}{\tilde{T}_0} \sum_{t = h+p}^{T_0} f_t'f_t \overset{p}{\rightarrow} \Sigma_{F_0}$, where $\tilde{T}_0 = T_0 - h_p+1$ is the effective pre-treatment sample size. 
    \end{enumerate}
\end{assumption}



Assumption \ref{asm:c-factor} posits that the cyclical components across units are driven by a common factor structure and idiosyncratic error terms. Parts (i)-(ii) of this assumption are also used in \cite{abadie2010synthetic}. Specifically, Assumption \ref{asm:c-factor}(i)
requires the factor to be uniformly bounded by some constant $\overline{F}$, while Assumption \ref{asm:c-factor}(ii) implies that $\frac{1}{M} \sum_{t = T_0 - M +1}^{T_0} f_t'f_t$ and $\Sigma_{F_0}$ in part (iii) are positive definite for any positive integer $M$.


Next, we impose high-level assumptions on the estimation error from the decomposition step. 


\begin{assumption} \label{asm:error convergence}
Let $\hat{u}_{i,t} \equiv \hat{c}_{i,t} - c_{i,t} = \tau_{i,t} - \hat{\tau}_{i,t}$. For $i = 1,2,\cdots, N+1$,
\begin{enumerate}[label=(\roman*)]
\item $\hat{u}_{i,t} = o_p(1)$ for each $t = h+p, \dots, T_0$ with $T_0\rightarrow \infty$.
\item $\sum_{t = h+p}^{T_0} \hat{u}_{i,t}^2 = o_p(T_0)$ with $T_0\rightarrow \infty$.
\end{enumerate}
\end{assumption}

It is worth noting that Assumption \ref{asm:error convergence} is not tied to any specific filtering method, including the Hamilton filter. The theoretical guarantees established under this assumption apply broadly to any detrending technique that satisfies the stated condition on estimation error. For illustration, we demonstrate that Assumption \ref{asm:error convergence} holds when unit roots and deterministic trends are removed using the Hamilton filter. 

The following theorem demonstrates the asymptotic unbiasedness of the synthetic business cycle estimator.
\begin{theorem}\label{thm:unbiasedness}
\ \begin{enumerate}[label=(\roman*)]
     \item Under the specification of trend and business cycle in (\ref{eqn:tau-projection}) and Assumptions \ref{asm:c-factor} - \ref{asm:error convergence},
the estimator $\hat{Y}_{1,t}(0)$ has the following asymptotic characterization as $T_0 \rightarrow \infty$:
\begin{align*}
\hat{Y}_{1,t}(0) - Y_{1,t}(0) = \sum_{i=2}^{N+1}\hat{w}_i (\varepsilon_{i,t} - \varepsilon_{1,t})+o_p(1),
\end{align*}
for the post-treatment periods $T_0+1 \leq t \leq T_0 + h$. In particular, $\hat{Y}_{1,t}(0)$ is an asymptotically unbiased estimator of $Y_{1,t}(0)$ at $T_0+1 \leq t \leq T_0 + h$. 
     \item Under the special case where the cyclical component follows an exact low-dimensional factor structure   $c_{i,t}  = \lambda_i' f_t$ without idiosyncratic shocks,
the synthetic business cycle estimator of $\hat{Y}_{1,t}(0)$ is consistent for each post-treatment period:
    \begin{align*}
        \hat{Y}_{1,t}(0) \overset{p}{\rightarrow} Y_{1,t}(0),\ \    T_0+1 \leq t \leq T_0 + h.
    \end{align*}
    \end{enumerate}

\end{theorem}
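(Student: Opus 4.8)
The plan is to split the prediction error into a trend-forecast error and a cyclical-imputation error,
\[
\hat{Y}_{1,t}(0) - Y_{1,t}(0) = \bigl(\hat{\tau}_{1,t} - \tau_{1,t}\bigr) + \bigl(\hat{c}_{1,t} - c_{1,t}\bigr),
\]
and then to expand the second piece. Substituting $\hat{c}_{i,t} = c_{i,t} + \hat{u}_{i,t}$ and the factor structure $c_{i,t} = \lambda_i f_t + \varepsilon_{i,t}$ into $\hat{c}_{1,t} = \sum_{i=2}^{N+1}\hat{w}_i \hat{c}_{i,t}$ yields
\[
\hat{c}_{1,t} - c_{1,t} = \Bigl(\sum_{i=2}^{N+1}\hat{w}_i \lambda_i - \lambda_1\Bigr) f_t + \Bigl(\sum_{i=2}^{N+1}\hat{w}_i \varepsilon_{i,t} - \varepsilon_{1,t}\Bigr) + \sum_{i=2}^{N+1}\hat{w}_i \hat{u}_{i,t}.
\]
The strategy is to show that the trend error, the loading-mismatch term $(\sum_i \hat{w}_i \lambda_i - \lambda_1) f_t$, and the post-treatment detrending term $\sum_i \hat{w}_i \hat{u}_{i,t}$ are each $o_p(1)$, leaving the idiosyncratic term as the leading object. (The stated form $\sum_i \hat{w}_i(\varepsilon_{i,t}-\varepsilon_{1,t})$ differs from $\sum_i \hat{w}_i\varepsilon_{i,t}-\varepsilon_{1,t}$ only by $(\sum_i \hat{w}_i - 1)\varepsilon_{1,t}$, which is likewise conditionally mean zero, so the distinction is immaterial for unbiasedness.)

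The easy pieces come first. For $T_0+1\le t\le T_0+h$ with $h$ fixed, the forecasting lags $Y_{1,t-h},\dots,Y_{1,t-h-p+1}$ all lie in the pre-treatment sample, so the trend error $\hat{\tau}_{1,t}-\tau_{1,t}$ is exactly the out-of-sample analogue of $\hat{u}_{1,t}$; the same super-consistency of the detrending regression that underlies Assumption \ref{asm:error convergence}(i)---whereby the coefficient estimation error shrinks fast enough to offset the nonstationary regressors---delivers $\hat{\tau}_{1,t}-\tau_{1,t}=o_p(1)$. For $\sum_i\hat{w}_i\hat{u}_{i,t}$ I would first argue that the weights are $O_p(1)$: by Assumption \ref{asm:error convergence}(ii) and Cauchy--Schwarz, $\tilde{T}_0^{-1}\sum_t\hat{c}_{i,t}\hat{c}_{j,t}=\tilde{T}_0^{-1}\sum_t c_{i,t}c_{j,t}+o_p(1)$, and in part (i) its limit is positive definite because the independent idiosyncratic terms contribute a strictly positive diagonal, so $\hat{w}=O_p(1)$ and hence $\sum_i\hat{w}_i\hat{u}_{i,t}=o_p(1)$. (In the exact-factor case of part (ii) the donor cycles may be collinear, so one instead works with the unique OLS-fitted cycle or a minimum-norm weight selection.)

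The crux is the loading-mismatch term. Writing $d \equiv \lambda_1 - \sum_i \hat{w}_i \lambda_i$, I would bound the minimized pre-treatment objective in (\ref{eqn:cycle-weights}) by its value at a fixed loading-matching weight $w^0$ with $\sum_i w^0_i \lambda_i = \lambda_1$ (which exists provided $\lambda_1$ lies in the span of $\{\lambda_2,\dots,\lambda_{N+1}\}$), and then convert a small objective into a small $\|d\|$: since $\sum_t(d f_t)^2 = d\bigl(\sum_t f_t f_t'\bigr)d' \ge \underline{\xi}\,\tilde{T}_0 \|d\|^2$ by Assumption \ref{asm:c-factor}(ii)--(iii), while the $\ell_2$ triangle inequality bounds $\sqrt{\sum_t(d f_t)^2}$ by $\sqrt{\text{objective}}$ plus the $\ell_2$ norm of the residual terms, one obtains $\|d\|^2 \lesssim \text{(objective)}/\tilde{T}_0$. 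In part (ii) the objective at $w^0$ equals $\sum_t(\hat{u}_{1,t}-\sum_i w^0_i \hat{u}_{i,t})^2 = o_p(\tilde{T}_0)$ by Assumption \ref{asm:error convergence}(ii), so $d=o_p(1)$; since $|f_{l,t}|\le\overline{F}$, the mismatch term is $o_p(1)$, and with no idiosyncratic term present the decomposition collapses to $o_p(1)$, giving consistency.

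The main obstacle is part (i). With idiosyncratic shocks present, the objective at $w^0$ is $\sum_t(\varepsilon_{1,t}-\sum_i w^0_i \varepsilon_{i,t}+\text{detrending terms})^2 = O_p(\tilde{T}_0)$ rather than $o_p(\tilde{T}_0)$, so the eigenvalue argument alone yields only $d=O_p(1)$. Closing this gap requires either (a) a near-perfect pre-treatment fit condition under which $(\sum_i w^0_i \lambda_i - \lambda_1)f_t = \varepsilon_{1,t}-\sum_i w^0_i \varepsilon_{i,t}$ holds across pre-treatment $t$, so that averaging against $f_t$ and invoking the independence of $\varepsilon_{i,t}$ from the factors (hence $\tilde{T}_0^{-1}\sum_t \varepsilon_{i,t} f_t' = O_p(\tilde{T}_0^{-1/2})$) gives $d=O_p(\tilde{T}_0^{-1/2})=o_p(1)$, in the spirit of \cite{abadie2010synthetic}; or (b) the observation that the cyclical factors are mean zero (the cyclical components are mean zero by construction), so any residual mismatch $d f_t$ contributes no bias even if it does not vanish pathwise. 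Either route suffices for the claimed asymptotic unbiasedness, which then follows from the leading term: because $\hat{w}$ is a function of the pre-treatment sample, which is independent of the post-treatment $\{\varepsilon_{i,t}\}$ by the independence-over-$t$ part of Assumption \ref{asm:c-factor}, the conditional mean of $\sum_i \hat{w}_i \varepsilon_{i,t}-\varepsilon_{1,t}$ given the pre-treatment data is zero.
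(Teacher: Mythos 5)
Your decomposition and your treatment of the routine terms match the paper's proof: it likewise splits the error into a trend piece and a cycle piece, disposes of the detrending remainders via Assumption \ref{asm:error convergence}, and controls the weights through its Lemma \ref{lm:w-consistency}, which proves $\hat w_i - w_i^0 \overset{p}{\rightarrow} c_i$ by essentially the Gram-matrix computation you sketch (with the idiosyncratic variance $\Sigma_{\varepsilon_{-1}}$ supplying invertibility). Your part (ii) argument---evaluating the minimized objective in (\ref{eqn:cycle-weights}) at a loading-matching $w^0$ and converting a small objective into a small mismatch $\|d\|$ via the eigenvalue bound in Assumption \ref{asm:c-factor}(ii)---is a more elementary alternative to the paper's route, which instead assumes a perfect pre-treatment fit and projects the mismatch against the factors through $\phi_t = f_t'(F_0'F_0)^{-1}F_0'$, obtaining $R_{1t},R_{2t},R_{3t}=o_p(1)$. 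Your explicit worry about donor collinearity in the exact-factor case, and your observation that $\sum_i\hat w_i(\varepsilon_{i,t}-\varepsilon_{1,t})$ and $\varepsilon_{1,t}-\sum_i\hat w_i\varepsilon_{i,t}$ differ by a conditionally mean-zero term, are both points the paper passes over silently.

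The gap is in part (i), and you located it correctly but did not close it. Your route (a) is in substance the paper's proof: the paper explicitly assumes a perfect pre-treatment fit (following \cite{abadie2010synthetic}), which forces $F_0\lambda_1' - F_0\sum_i\hat w_i\lambda_i' = \sum_i\hat w_i(\varepsilon_i+\hat u_i)-\varepsilon_1-\hat u_1$, and then shows the resulting projected-noise terms $R_{4t}=\phi_t\sum_i\hat w_i\varepsilon_i$ and $R_{5t}=-\phi_t\varepsilon_1$ vanish, using Rosenthal's inequality and Markov to get the $O(\tilde T_0^{-1/2})$-type rate you describe via cross moments. Route (b), however, is not a valid substitute, for two reasons. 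First, Assumption \ref{asm:c-factor} imposes independence over $t$ only on $\varepsilon_{i,t}$, not on $f_t$; since $d=\lambda_1-\sum_i\hat w_i\lambda_i$ is a function of the pre-treatment sample and the factors may be serially dependent, $d$ and the post-treatment $f_t$ need not be independent, so mean-zero factors do not deliver $\mathbb{E}[d\,f_t]=0$. Second, even granting that expectation, part (i) asserts a pathwise characterization with an $o_p(1)$ remainder; a mismatch term that is merely mean zero but not $o_p(1)$ would invalidate the displayed expansion and leave only a weaker unbiasedness claim. Your proof therefore stands only if you commit to route (a) and state the (asymptotically) perfect pre-treatment fit as an explicit assumption, which is exactly what the paper does.
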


The above theorem shows that our method delivers an asymptotically unbiased estimator of $Y_{1,t}(0)$ for each post-treatment $t$. However, consistent estimation of $Y_{1,t}(0)$ at each post-treatment date is generally not attainable when the idiosyncratic error is present.\footnote{An alternative approach in the literature for establishing consistency of synthetic control is to define the estimand as the average treatment effect over the post-treatment periods. With a diverging number of post-treatment observations, this average effect can be consistently estimated. However, this strategy does not guarantee consistency of the estimated treatment effect at any specific post-treatment time point.} 
This is because the idiosyncratic part of the treated unit(s) cannot be estimated using the control units, as they lack a common structure. 
Nevertheless, as we demonstrate in part (ii), the signal part of the counterfactual remains consistently estimable, yielding an asymptotically unbiased estimate of the treatment effect. This result parallels the theoretical findings of \citet{abadie2010synthetic}.
In a special case where the cyclical components are only driven by common factors without idiosyncratic errors, consistency of $\hat{Y}_{1,t}$ for each post-treatment period can be established.

\subsection{Discussion of Filters} \label{sec:filters}

There are several well-established methods for decomposing trend and cyclical components. However, we find the Hamilton filter offers distinct advantages in this context. As the preceding analysis indicates, any detrending method used in the synthetic business cycle approach should meet three key criteria.

First, the cyclical component extracted by the filter should represent a stationary process; otherwise, the resulting estimates remain vulnerable to the spurious synthetic control problem. The following lemma establishes the stationarity of the cyclical component extracted with the Hamilton filter for a broad class of data-generating processes, including: (a) integrated processes such as random walks, and (b) processes that are stationary around deterministic time trends.

\begin{lemma} \label{lm:c-stationary}
$c_{i,t}$ defined in equation (\ref{eqn:tau-projection})
is stationary if 
  (1) $Y_{i,t}(0)$ is
stationary around a polynomial deterministic trend of time with order of $d_{i}\leq p$, and satisfies 
\begin{equation}\label{eqn:poly trend}
T^{-\frac{1}{2}}\sum_{t=1}^{[Tr]}(Y_{i,t}(0)-\delta _{i,0}-\delta _{i,1}t-\delta
_{i2}t^{2}\dots -\delta _{i,d_{i}}t^{d_{i}})\overset{d}{\rightarrow }\omega
_{i}B_{i}(r),
\end{equation}%
where $[Tr]$ is the largest integer no greater than $Tr$, $B_{i}$ is a standard
Brownian motion; or alternatively (2) if $d_{i}$-order differences of $Y_{i,t}(0)$ are
stationary for some $d_{i}\leq p$, and satisfy 
\begin{equation}\label{eqn: integrated}
T^{-\frac{1}{2}}\sum_{t=1}^{[Tr]}(\Delta ^{d_{i}}Y_{i,t}(0)-\mu _{i})\overset{d}{%
\rightarrow }\omega _{i}B_{i}(r),
\end{equation}%
where $\mu _{i}$ represents the population mean of $\Delta ^{d_{i}}Y_{i,t}(0)$.
\end{lemma}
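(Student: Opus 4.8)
\section*{Proof proposal for Lemma \ref{lm:c-stationary}}

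The plan is to prove stationarity of $c_{i,t}$ by exhibiting, in both regimes, a lag polynomial that annihilates the nonstationary part of $Y_{i,t}(0)$, and then showing that the population projection in (\ref{eqn:tau-projection}) must coincide with the combination it induces. Write the forecast error associated with coefficients $(\alpha_{i,0},\dots,\alpha_{i,p})$ as $A_i(L)Y_{i,t}(0)-\alpha_{i,0}$, where $A_i(L) \equiv 1 - \alpha_{i,1}L^{h} - \alpha_{i,2}L^{h+1} - \cdots - \alpha_{i,p}L^{h+p-1}$ is a lag polynomial whose coefficients on $L^{1},\dots,L^{h-1}$ are fixed at zero and whose coefficient on $L^{0}$ is fixed at one. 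The key observation is that $c_{i,t}$ is stationary whenever $A_i(L)$ carries a zero of order $d_i$ at $L=1$, i.e.\ $A_i(1)=A_i'(1)=\cdots=A_i^{(d_i-1)}(1)=0$, so that $A_i(L)=(1-L)^{d_i}\tilde A_i(L)$ for some polynomial $\tilde A_i$. First I would verify that such coefficients exist: imposing the $d_i$ derivative conditions at $L=1$ is a linear system in the $p$ free coefficients $(\alpha_{i,1},\dots,\alpha_{i,p})$, and since $d_i\le p$ the system is solvable (its evaluation functionals are linearly independent), uniquely when $d_i=p$ and along an affine subspace of dimension $p-d_i$ otherwise.

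Next I would check that the induced forecast error is stationary in each regime, using the same divisibility device. In the integrated case (2), $A_i(L)Y_{i,t}(0)=\tilde A_i(L)(1-L)^{d_i}Y_{i,t}(0)=\tilde A_i(L)\Delta^{d_i}Y_{i,t}(0)$, a finite moving average of the stationary increments $\Delta^{d_i}Y_{i,t}(0)=\mu_i+u_{i,t}$; hence after absorbing the constant $\tilde A_i(1)\mu_i$ into the intercept, $c_{i,t}=\tilde A_i(L)u_{i,t}$ is mean-zero and stationary, with condition (\ref{eqn: integrated}) certifying that $u_{i,t}$ is a genuine short-memory process. In the deterministic-trend case (1), decompose $Y_{i,t}(0)=P_{d_i}(t)+\eta_{i,t}$ with $P_{d_i}$ a polynomial of degree $d_i$ and $\eta_{i,t}$ the stationary deviation certified by (\ref{eqn:poly trend}); then $A_i(L)\eta_{i,t}$ is stationary as a finite moving average, while $(1-L)^{d_i}$ collapses $P_{d_i}(t)$ to a constant (and $\tilde A_i(L)$ of a constant is a constant), so $A_i(L)P_{d_i}(t)$ is constant and again absorbed into the intercept.

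The remaining and hardest step is to argue that the population projection actually selects a stationarizing coefficient vector, so that the projection residual---not merely some feasible residual---is stationary. The plan is to reparametrize the regressor span $\{1,Y_{i,t-h}(0),\dots,Y_{i,t-h-p+1}(0)\}$ into its stationary directions (the differences $\Delta^{d_i}Y_{i,t-h}(0),\dots,\Delta^{p-1}Y_{i,t-h}(0)$) and its nonstationary directions (the lower-order terms $Y_{i,t-h}(0),\dots,\Delta^{d_i-1}Y_{i,t-h}(0)$, of integration orders $d_i$ down to $1$), a triangular and hence invertible change of basis. Any coefficient vector outside the stationarizing subspace leaves an integrated or trending component in the residual, whose scaled partial sums diverge under (\ref{eqn:poly trend})--(\ref{eqn: integrated}) and whose residual second moment therefore cannot be minimal; since the stationarizing subspace is nonempty and attains a finite objective, the projection must land there. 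Equivalently, interpreting $\mathbb{P}(\cdot\mid\cdot)$ as the probability limit of the sample least-squares fit, the stated convergences place us in the canonical cointegrating-regression setting, where least squares is super-consistent for the combination that annihilates the common stochastic and deterministic trends.

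I expect this identification of the projection with the cointegrating combination to be the main obstacle, since it is the only place where the nonstationarity of the regressors must be confronted directly rather than handled by algebra: the earlier steps reduce to the $(1-L)^{d_i}$ factorization once divisibility is secured, whereas here one must rule out nonstationary residuals by a magnitude comparison grounded in the functional central limit theorem conditions. A clean way to finish is to note that the two cases differ only in how $A_i(L)$ acts---on a deterministic polynomial versus on a stochastic $I(d_i)$ trend---so the same factorization and the same super-consistency argument dispatch both, delivering a mean-zero stationary $c_{i,t}$ in each regime.
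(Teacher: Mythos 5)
Your proposal is correct in outline, but note that the paper does not actually argue this lemma itself: its entire proof is the citation ``See Proposition 4 of \cite{hamilton2018you}.'' What you have written is, in effect, a reconstruction of that proposition's proof, and it matches Hamilton's strategy step for step: (a) the constrained lag polynomial $A_i(L)=1-\alpha_{i,1}L^{h}-\cdots-\alpha_{i,p}L^{h+p-1}$ can be given a zero of order $d_i\leq p$ at $L=1$, since the $d_i$ derivative conditions at $L=1$ restricted to the free coefficients form a Vandermonde-type full-rank system on the distinct lags $h,\dots,h+p-1$; (b) any such coefficient vector yields a residual that is a finite moving average of $\Delta^{d_i}Y_{i,t}(0)$ in case (2), or of the stationary deviation plus a constant absorbed by the intercept in case (1); and (c) the projection---read, as you correctly do, as the probability limit of the sample least-squares fit---selects the annihilating combination. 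The one place your sketch is softer than a real proof is step (c): the ``residual second moment cannot be minimal'' comparison is heuristic, because nonstationary regressands have no time-invariant population second moments to compare and the sample criterion is a random function minimized over a continuum. The rigorous route, which is what Hamilton's Proposition 4 actually uses, is precisely the device you mention only in passing: the Sims--Stock--Watson-type triangular rotation of the regressor space into stationary directions and dominant trend directions, after which the conditions (\ref{eqn:poly trend})--(\ref{eqn: integrated}) supply the functional central limit theory showing that the coefficients on the trend directions converge (superconsistently) to the values that annihilate the stochastic or deterministic trend, so the fitted residual converges to the stationary combination of step (b). Treat the magnitude comparison as motivation and the rotation argument as the proof; with that substitution, your proposal is a complete and faithful rendering of the result the paper merely outsources.
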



Second, the filter must provide accurate estimates of both the trend and cyclical components, in accordance with Assumption \ref{asm:error convergence}. The Hamilton filter’s simple parametric structure facilitates compliance with this assumption. The next lemma shows that the Hamilton filter readily satisfies Assumption \ref{asm:error convergence}(i), the consistency of trend and cycle estimates.

\begin{lemma} \label{lm:hamilton-filter-consistency}
For each $i$, assume that the cyclical component $c_{i,t}$ is weakly stationary and that the covariance matrix of $(Y_{i,t-h},\cdots,Y_{i,t-h-p+1})$ exists and is non-singular.
$\tau_{i,t}$ and $c_{i,t}$ defined in equation (\ref{eqn:tau-projection}) can be consistently estimated by regressing $Y_{i,t}(0)$ on its past values $Y_{i,t-h}(0), Y_{i,t-h-1}(0), \cdots, Y_{i,t-h-p+1}(0)$ for $h+p \leq t \leq T_0 + h$. 
\end{lemma}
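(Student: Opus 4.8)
The plan is to recognize that the Hamilton filter is nothing more than a finite-order linear projection, so consistency of $\hat{\tau}_{i,t}$ and $\hat{c}_{i,t}$ reduces to consistency of the OLS estimates of the projection coefficients $(\alpha_{i,0},\dots,\alpha_{i,p})$ defined in (\ref{eqn:tau-projection}). First I would stack the regressors into $X_{i,t} \equiv (1, Y_{i,t-h}(0), \dots, Y_{i,t-h-p+1}(0))'$ and write the population coefficients as $\alpha_i = \Sigma_{XX}^{-1}\sigma_{XY}$, where $\Sigma_{XX} \equiv \mathbb{E}(X_{i,t}X_{i,t}')$ and $\sigma_{XY} \equiv \mathbb{E}(X_{i,t}Y_{i,t}(0))$. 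The maintained assumption that the covariance matrix of $(Y_{i,t-h},\dots,Y_{i,t-h-p+1})$ exists and is non-singular guarantees that $\Sigma_{XX}$ is invertible, so $\alpha_i$ is well defined and the projection error $c_{i,t}$ satisfies the orthogonality $\mathbb{E}(X_{i,t}c_{i,t}) = 0$ by construction. The OLS estimator has the analogous form $\hat{\alpha}_i = \hat{\Sigma}_{XX}^{-1}\hat{\sigma}_{XY}$ built from the sample averages over the fitting window $h+p \le t \le T_0$.

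The core step is a law of large numbers for the sample moments: I would argue that $\hat{\Sigma}_{XX} \overset{p}{\to} \Sigma_{XX}$ and $\hat{\sigma}_{XY} \overset{p}{\to} \sigma_{XY}$ as $T_0 \to \infty$. Under the maintained assumption that the regressors possess a finite covariance matrix and that $c_{i,t}$ is weakly stationary, the level process $Y_{i,t}(0) = \tau_{i,t} + c_{i,t}$ is itself covariance-stationary, since $\tau_{i,t}$ is a fixed linear combination of the regressors; the averages then converge by an ergodic (weak-dependence) law of large numbers applied to the products $X_{i,t}X_{i,t}'$ and $X_{i,t}Y_{i,t}(0)$. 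Combining this with the non-singularity of $\Sigma_{XX}$ and the continuous mapping theorem yields $\hat{\alpha}_i \overset{p}{\to} \alpha_i$.

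Consistency of the filtered components then follows pointwise. At a fixed date $t$, I would write $\hat{\tau}_{i,t} - \tau_{i,t} = \sum_{j=0}^{p}(\hat{\alpha}_{i,j} - \alpha_{i,j})X_{i,t,j}$, where each regressor value $X_{i,t,j}$ is $O_p(1)$ by the finite-variance assumption; since $\hat{\alpha}_{i,j} - \alpha_{i,j} = o_p(1)$, the product is $o_p(1)$, so $\hat{\tau}_{i,t} \overset{p}{\to}\tau_{i,t}$, and hence $\hat{c}_{i,t} = Y_{i,t}(0) - \hat{\tau}_{i,t} \overset{p}{\to} c_{i,t}$. This is exactly the statement of Assumption \ref{asm:error convergence}(i) for the Hamilton filter.

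I expect the law of large numbers to be the only delicate point. The regressors are lagged \emph{levels} of a potentially nonstationary series, so the hypothesis that their covariance matrix exists is doing real work: it is precisely what excludes the unbalanced, genuinely integrated case and places us in a covariance-stationary environment where the ergodic theorem applies. If one wished instead to cover the pure unit-root case directly---where this covariance matrix does not exist---one would need Phillips-type functional-limit arguments with $T_0$-rate (super-consistent) normalization of $\hat{\Sigma}_{XX}$, delivering convergence of the coefficients to the implied cointegrating values; but under the hypotheses stated here the standard stationary argument suffices, and the orthogonality $\mathbb{E}(X_{i,t}c_{i,t}) = 0$ closes the proof.
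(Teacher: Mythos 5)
Your proof is correct under a literal reading of the lemma's hypotheses, but it takes a genuinely different route from the paper. The paper's proof is a two-line appeal to external machinery: it cites Proposition 4 of \cite{hamilton2018you} for the consistency of $(\hat{\alpha}_{i,0},\dots,\hat{\alpha}_{i,p})$, and then extends to the post-treatment fitted values $\hat{\tau}_{1,t}$, $T_0+1 \le t \le T_0+h$, by Slutsky's theorem---exactly the pointwise step you carry out by writing $\hat{\tau}_{i,t}-\tau_{i,t}=\sum_{j}(\hat{\alpha}_{i,j}-\alpha_{i,j})X_{i,t,j}$ with $O_p(1)$ regressors. What you do differently is to make the coefficient-consistency step self-contained: sample moments, an ergodic law of large numbers, non-singularity of $\Sigma_{XX}$, and the continuous mapping theorem. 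That buys transparency at the cost of scope. Hamilton's Proposition 4 is precisely a nonstationary-regression result: it covers unit-root and trend-stationary $Y_{i,t}(0)$, where the covariance matrix of the lagged levels either grows with $t$ or has a $t$-dependent mean structure, so the fixed population projection in (\ref{eqn:tau-projection}) must be understood through the pooled-OLS limit rather than through time-invariant second moments. Your stationary-LLN argument cannot reach those cases, and your closing paragraph correctly diagnoses this: the paper's own leading example immediately after the lemma (a pure random walk) lies outside your hypotheses and is handled there by exactly the Phillips-type functional-limit normalization you sketch. So the two proofs are complementary---yours is elementary and self-contained on the covariance-stationary domain; the paper's citation is what actually delivers the lemma on the nonstationary domain the synthetic business cycle method is designed for.

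Two small points to tighten. First, your claim that $Y_{i,t}(0)=\tau_{i,t}+c_{i,t}$ is covariance-stationary ``since $\tau_{i,t}$ is a fixed linear combination of the regressors'' is circular as stated: a random walk admits the same decomposition with stationary $c_{i,t}$ without the levels being stationary. What does the work is the assumed existence of a \emph{time-invariant} finite regressor covariance matrix, which you should invoke explicitly. Second, your law of large numbers needs an ergodicity or weak-dependence condition on the levels that is not among the lemma's stated hypotheses; the paper avoids stating one only because it delegates the entire step to \cite{hamilton2018you}.
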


We next illustrate Assumption \ref{asm:error convergence}(i)-(ii) for the Hamilton filter using a
leading example. Consider a unit root process $Y_{i,t}(0) = Y_{i,t-1}(0) + \varepsilon_{i,t}$, where $\{\varepsilon_{i,t}\}$ is a martingale difference sequence. This implies that the cyclical component is $c_{i,t} = \sum_{s=0}^{h-1} \varepsilon_{i,t-s}$, which represents the accumulation of shocks within the $h$-period window. 
Linearly projecting $Y_{i,t}(0)$ on its past values, as in equation (\ref{eqn:tau-projection}), yields consistent estimators of the projection coefficients: $\hat{\alpha}_{i,1} \overset{p}{\rightarrow} 1$ and $\hat{\alpha}_{ij} \overset{p}{\rightarrow} 0$ for $2 \leq j \leq p$. Moreover, we can show that the convergence of the cyclical component is uniform over time. For illustration, suppose we take $h = 1$ and $p = 1$. The coefficient estimator in equation (\ref{eqn:tau-projection}) is 
\begin{align*}
    T(\hat{\alpha}_{i,1} -1 ) &= \frac{T^{-1}\sum Y_{i,t-1}(0)\varepsilon_{i,t}}{T^{-2}\sum( Y_{i,t-1}(0))^2}  \overset{d}{\rightarrow } \frac{\int_0^1 B_i(r)dB_i(r)}{\int_0^1[B_{i}(r)]^2 dr},
\end{align*}
where $B_i(r)$ represents standard Brownian motion. 
The estimation error of the cyclical component is therefore 
\begin{align*}
    \hat{u}_{i,t} = (1- \hat{\alpha}_{i,1})Y_{i,\lfloor Tr \rfloor}(0) &= O_p(T^{-1/2}) \frac{1}{\sqrt{T}} Y_{i,\lfloor Tr \rfloor}(0) \\
    &= O_p(T^{-1/2})O_p\left(\sqrt{r/T}\right) = o_p(1),
\end{align*}
and the convergence holds for all $r\in [0,1]$. We can further verify Assumption \ref{asm:error convergence} with the following convergence result: 
\begin{equation*}
\sum_{t=1}^{T}\hat{u}_{i,t}^{2}=\left[ T(\hat{\alpha}_{i}-\alpha _{i})%
\right] ^{2} \cdot T^{-2}\sum_{t=1}^{T}Y_{i,t-1}(0)^{2} 
\overset{d}{\rightarrow }\frac{\left[ \int_{0}^{1}B_{i}(r)]dB_{i}(r)\right]
^{2}}{\sigma_i^2 \int_{0}^{1}[B_{i}(r)]^{2}dr }. 
\end{equation*}%

Third, the filter must permit post-treatment extrapolation of the trend component. Because the synthetic business cycle procedure extrapolates the treated unit’s trend based on the pre-treatment observations, the filter should be one-sided, relying exclusively on past data to estimate trend and cycle. Symmetric filters that incorporate future observations are therefore unsuitable in this setting.

That said, the synthetic business cycle estimator represents just one particular application of filtering methods. In general, the choice of filter is tailored to the specific context. For discussions of the circumstances in which alternative filters, such as the Hodrick–Prescott filter, may be advantageous, see \citet{phillips2021boosting} and \citet{mei2024boosted}.

\subsection{A Machine Learning Perspective}

While the asymptotic unbiasedness in Theorem \ref{thm:unbiasedness} is rigorously established under the assumptions, a modern perspective of machine learning calls for a procedure to be versatile enough to accommodate deviations from tightly specified conditions. Here we provide a heuristic discussion of why our algorithm provides reasonable performance under a more general data-generating process.

Stationary multivariate time series models have been extensively studied over the years. Recently, \cite{miao2023high} work with a vector autoregression (VAR) of order $p$  with latent factors:
\begin{equation}\label{var}
    Y_t = \sum_{j=1}^p B_j Y_{t-j} + \tilde \Lambda \tilde f_t + \tilde \epsilon_t, 
\end{equation}
where $Y_t = (Y_{1,t},\ldots,Y_{N+1,t})'$ in our context is an $(N+1)$-vector, $\tilde{\epsilon}_t$ is the associated idiosyncratic error, $B_j$ is an $(N+1)\times (N+1)$ coefficient matrix, and $\tilde \Lambda$ is the $(N+1)\times L $ loading matrix.
The VAR system can be pushed forward for $h$ periods and then formulated by iterative substitution as:
\begin{equation}\label{var_h}
    Y_{t+h} = \sum_{j=1}^p B_j^{(h)} Y_{t-j} + \tilde 
 \Lambda^{(h)} \tilde  f_t + \tilde  \epsilon_t^{(h)}.
\end{equation}
A restricted VAR, as in \cite{mayoral2013heterogeneous}, assumes 
\begin{equation}\label{eq:b_diag}
B_j^{(h)} = \mathrm{diag}(b^{(h)}_{1,j},\ldots,b^{(h)}_{N+1,j} )
\end{equation}
to be diagonal. 
A further special case of the VAR model is the interactive fixed effects model \citep{bai2009panel,
moon2017dynamic} in panel data analysis, where the slope coefficients are reduced to  homogeneous scalar coefficients
$B_j^{(h)} = b_j^{(h)} I_{N+1}$, with the identity matrix $I_{N+1}$ of the compatible size.
While standard the panel data regression attempts to fit the entire multivariate time series $Y_{t+h}$, \cite{hsiao2022transformed}'s transformed estimator allows for choosing the weight of a specific cross-section unit, without loss of generality $i=1$, for normalization. 

In contrast to all the papers mentioned in the above paragraph, we consider \eqref{var_h} with nonstationary $Y$. For simplicity, assume every time series $(Y_{i,t})$ is nonstationary, and we denote the population-level Hamilton filter of trend extraction as $\mathbb{H}_{h,p}$, which is a diagonal operator. After implementing the Hamilton filter and saving the residuals, that is, applying the operator
$(I_{N+1} - \mathbb{H}_{h,p}   )$ 
on both sides of \eqref{var_h}, we obtain:
\begin{eqnarray}
    c_{t+h} & = & (I_{N+1} - \mathbb{H}_{h,p}   ) Y_{t} 
     =  \zeta^{(h)}_{t} + \tilde  \Lambda^{(h)} \tilde f_t + \tilde  \epsilon_t^{(h)}, \label{var_res}
\end{eqnarray}
where 
$$
\zeta^{(h)}_{t} \equiv (I_{N+1} - \mathbb{H}_{h,p}   ) \sum_{j=1}^p B_j^{(h)} Y_{t-j} - \mathbb{H}_{h,p}  ( \tilde  \Lambda^{(h)} \tilde  f_t + \tilde  \epsilon_t^{(h)}).
$$
If the Hamilton filter is sufficient to capture the trend of each time series, 
then $c_{t+h}$ on the left-hand side of \eqref{var_res} is stationary. 
The second term of $\zeta^{(h)}_{t}$ is negligible  
uniformly over $t$ for it is a regression of stationary time series on nonstationary ones.
In particular, under the diagonal specification \eqref{eq:b_diag}, 
the first term of $\zeta_t$ vanishes as
$\mathbb{H}_{h,p} B_j Y_{t-j} =  B_j \mathbb{H}_{h,p} Y_{t-j} = B_j Y_{t-j} $.
This is the ideal data generating process that suits our procedure.

A generic VAR allows $B_j$ to be non-diagonal. In this case, the first item of $\zeta^{(h)}_{t}$ must remain stationary; otherwise, the two sides of \eqref{var_res}
--- a stationary left-hand side and a nonstationary right-hand side --- 
cannot be balanced. 
Given that $\zeta^{(h)}_{t}$ is stochastically bounded,  
we can further decompose 
it into an additive form of latent factors and idiosyncratic shocks, and merge them with $\tilde 
 f_t$ and $\tilde  \epsilon^{(h)}_t$, respectively, to form $f_t$ and $\epsilon_t$ in \eqref{asm:c-factor}.
In other words, we can derive the factor representation \eqref{asm:c-factor} 
from \eqref{var_h} even if $B_j$ are generic VAR coefficient matrices,
provided that the Hamilton filter is capable of removing the trend. 
In machine learning terminology, though
the data generating process as in \eqref{var_h}  is in a 
multivariate \emph{fully corrected} structure, it is sufficient to apply the Hamilton filter in a univariate \emph{recurrent} form, as shown in Figure \ref{fig:diag}, which is much more parsimonious than the fully connected version.

\section{Monte Carlo Simulations} \label{sec:simulation}

We conduct simulation studies to assess the accuracy of counterfactual imputation using the proposed synthetic business cycle approach relative to the conventional synthetic control method.

We consider three models. In each model, the total number of units $N+1$ is set to 12, mimicking the empirical study of Hong Kong’s return.\footnote{We also conducted simulations with five units, as in \cite{masini2022counterfactual}; in that scenario, the synthetic business cycle estimator performs even better.} Pre-treatment lengths $T_0$ are chosen from $\{50,100,200\}$. The forecasting horizon is $h=2$, and the number of autoregressive lags is $p=2$.

\paragraph{Model 1:} Independent random walks with drifts \citep[the first simulation model in][]{masini2022counterfactual}: for $i=1,\dots,N+1,\;\; t=1,\dots,T,$
\begin{align*}
    Y_{i,t}(0) \;=\; Y_{i,t-1}(0) \;+\; \mu_i \;+\; \varepsilon_{i,t},
\end{align*}
where $\varepsilon_{i,t}$'s are independent across both units and time, and are drawn from the standard normal distribution. The drift terms $\mu_i$ are either fixed at 0 or 0.5, or drawn from $\mathcal{N}(0,1/4)$. 

\paragraph{Model 2:} Idiosyncratic unit root trends with common stationary factors: for $i=1,\dots,N+1,\;\; t=1,\dots,T,$
\begin{align*}
    Y_{i,t}(0) & \;=\; Y_{i,t-1}(0) \;+\; \lambda^{ar}_{i,1}f^{ar}_{t,1} \;+\; \lambda^{ar}_{i,2}f^{ar}_{t,2} \;+\; \varepsilon_{i,t}, \\
    f^{ar}_{t,j} & \;=\; \phi f^{ar}_{t-1,j} \;+\; u^{ar}_{t,j}, j=1,2, 
\end{align*}
where $\varepsilon_{i,t}$'s and $u^{ar}_{t,j}$'s are independent standard normals, where the superscript ``\textit{ar}'' denotes ``autoregressive.'' The two factor loadings, $\lambda^{ar}_{i,1}$ and $\lambda^{ar}_{i,2}$, are drawn independently from the standard normal distribution. The autoregressive coefficient $\phi \in \{0.2,0.5,0.8\}.$ Even though this model includes a common factor structure, it generates no cointegrating relationships. \footnote{We suppress the drift term $\mu_i$ in Models 2 and 3 to highlight the leading order of the time series determined by the stochastic trend. The effect of the drift is small on the MSE ratio as it can be accurately estimated thanks to its very strong signal.
}

\paragraph{Model 3:} A partial cointegration structure: for $i=1,\cdots,\lfloor(N+1)/2\rfloor$, $t=1,\dots,T,$
\begin{align*}
    Y_{i,t}(0) & \;=\; \lambda^{rw}_{i,1}f^{rw}_{t,1} \;+\; 
    \lambda^{rw}_{i,2}f^{rw}_{t,2} \;+\; \lambda^{ar}_{i,1}f^{ar}_{t,1} \;+\; 
    \lambda^{ar}_{i,2}f^{ar}_{t,2} \;+\; \varepsilon_{i,t}, \\
    f^{rw}_{t,j} & \;=\; f^{rw}_{t-1,j} \;+\; u^{rw}_{t,j}, j=1,2, \\
    f^{ar}_{tj} & \;=\; \phi f^{ar}_{t-1,j} \;+\; u^{ar}_{t,j}, j=1,2,
\end{align*}
where $\varepsilon_{i,t}$'s, $u^{rw}_{t,j}$'s, and $u^{ar}_{t,j}$'s are independent standard normal, where the superscript ``\textit{rw}'' denotes ``random walk.'' The remaining units, $i=\lfloor (N+1)/2 \rfloor,\ldots ,N+1$, are simulated according to Model 2.  
Under this design, half of the units (including the treated unit) are cointegrated.
Their stationary-factor loadings, $\lambda^{ar}_{i,1}$ and $\lambda^{ar}_{i,2}$, are drawn independently from a standard normal distribution, as in Model 2.  The nonstationary-factor loadings, $\lambda^{rw}_{i,1}$ and $\lambda^{rw}_{i,2}$, are drawn independently from $\mathcal{N}(0,\,T_0^{-1/3})$.  With this choice, the nonstationary signal asymptotically dominates  while in finite sample the stationary factor plays a nontrivial role.   

In implementing the procedures, we consider three coefficient specifications in the vertical regression step. The first case, unrestricted coefficients, includes an intercept and places no constraints on the coefficients; the conventional synthetic control estimator under this specification corresponds to the procedure studied by \citet{masini2022counterfactual}. The second case, signed weights, omits the intercept and requires that the weights sum to one.
The third case, non-negative weights, adds the additional restriction that all weights be non-negative.


Table \ref{tab:sim} reports the simulation results from 10,000 Monte Carlo replications.  For each design, we compute the ratio of mean-squared errors (MSE) of the synthetic-business-cycle (SBC) estimator to those of the conventional synthetic-control (SC) estimator, separately for the pre- and post-treatment periods:
\begin{align*}
    \text{pre-treatment MSE ratio:} & \left( \sum_{t=h+p}^{T_0} (\hat{Y}_{1,t}^{\text{SBC}} -Y_{1,t} )^2 \right) \Big/ \left( \sum_{t=h+p}^{T_0} (\hat{Y}_{1,t}^{\text{SC}} -Y_{1,t} )^2 \right), \\
    \text{post-treatment MSE ratio:} & \left( \sum_{t=T_0+1}^{T_0+h} (\hat{Y}_{1,t}^{\text{SBC}} -Y_{1,t} )^2 \right) \Big/ \left( \sum_{t=T_0+1}^{T_0+h} (\hat{Y}_{1,t}^{\text{SC}} -Y_{1,t} )^2 \right),
\end{align*}where $\hat Y_{1,t}^{\text{SBC}}$ and $\hat Y_{1,t}^{\text{SC}}$ denote the SBC and SC counterfactual predictions, respectively. A ratio below 1 indicates that the synthetic business cycle estimator delivers a lower MSE than the conventional synthetic control estimator.

\begin{sidewaystable}[htbp!]
    \centering
    \begin{tabular}{@{}rcccccccccccccccccc@{}}
    \toprule
    \multicolumn{19}{c}{Panel (a): Independent random walks with drifts (Model 1)} \\ \midrule
    & \multicolumn{6}{c}{Unrestricted coefficients} & \multicolumn{6}{c}{Signed weights} & \multicolumn{6}{c}{Non-negative weights} \\
    \cmidrule(lr){2-7} \cmidrule(lr){8-13} \cmidrule(lr){14-19}
    & \multicolumn{2}{c}{$\mu_i=0$} & \multicolumn{2}{c}{$=0.5$} & \multicolumn{2}{c}{$\sim N(0,1/4)$} & \multicolumn{2}{c}{$\mu_i=0$} & \multicolumn{2}{c}{$=0.5$} & \multicolumn{2}{c}{$\sim N(0,1/4)$} & \multicolumn{2}{c}{$\mu_i=0$} & \multicolumn{2}{c}{$=0.5$} & \multicolumn{2}{c}{$\sim N(0,1/4)$} \\
    \cmidrule(lr){2-3} \cmidrule(lr){4-5}  \cmidrule(lr){6-7} \cmidrule(lr){8-9} \cmidrule(lr){10-11} \cmidrule(lr){12-13} \cmidrule(lr){14-15} \cmidrule(lr){16-17} \cmidrule(lr){18-19}
    & pre & post & pre & post & pre & post & pre & post & pre & post & pre & post & pre & post & pre & post & pre & post \\ \midrule
    $T_0=50$ & 1.10 & 0.64 & 1.10 & 0.65 & 1.10 & 0.64 & 0.89 & 0.54 & 0.92 & 0.56 & 0.93 & 0.56 & 0.24 & 0.15 & 0.25 & 0.16 & 0.06 & 0.03 \\
    $100$ & 0.72 & 0.38 & 0.72 & 0.36 & 0.73 & 0.37 & 0.59 & 0.31 & 0.61 & 0.32 & 0.60 & 0.31 & 0.15 & 0.08 & 0.15 & 0.08 & 0.02 & 0.01 \\
    $200$ & 0.41 & 0.21 & 0.41 & 0.21 & 0.41 & 0.21 & 0.34 & 0.17 & 0.34 & 0.17 & 0.34 & 0.17 & 0.08 & 0.04 & 0.08 & 0.04 & 0.01 & 0.00 \\ \midrule 
    \multicolumn{19}{c}{Panel (b): Idiosyncratic trends with common stationary factors (Model 2)} \\ \midrule
    & \multicolumn{6}{c}{Unrestricted coefficients} & \multicolumn{6}{c}{Signed weights} & \multicolumn{6}{c}{Non-negative weights} \\
    \cmidrule(lr){2-7} \cmidrule(lr){8-13} \cmidrule(lr){14-19}
    & \multicolumn{2}{c}{$\phi=0.2$} & \multicolumn{2}{c}{$0.5$} & \multicolumn{2}{c}{$0.8$} & \multicolumn{2}{c}{$\phi=0.2$} & \multicolumn{2}{c}{$0.5$} & \multicolumn{2}{c}{$0.8$} & \multicolumn{2}{c}{$\phi=0.2$} & \multicolumn{2}{c}{$0.5$} & \multicolumn{2}{c}{$0.8$} \\
    \cmidrule(lr){2-3} \cmidrule(lr){4-5}  \cmidrule(lr){6-7} \cmidrule(lr){8-9} \cmidrule(lr){10-11} \cmidrule(lr){12-13} \cmidrule(lr){14-15} \cmidrule(lr){16-17} \cmidrule(lr){18-19}
    & pre & post & pre & post & pre & post & pre & post & pre & post & pre & post & pre & post & pre & post & pre & post \\ \midrule
    $T_0=50$ & 1.23 & 0.76 & 1.52 & 0.83 & 2.31 & 1.15 & 1.02 & 0.62 & 1.24 & 0.69 & 1.86 & 0.94 & 0.26 & 0.18 & 0.31 & 0.20 & 0.43 & 0.26 \\
    $100$  & 0.77 & 0.41 & 0.91 & 0.45 & 1.38 & 0.65 & 0.63 & 0.34 & 0.73 & 0.37 & 1.10 & 0.52 & 0.16 & 0.08 & 0.19 & 0.09 & 0.27 & 0.12 \\
    $200$  & 0.42 & 0.21 & 0.49 & 0.23 & 0.73 & 0.33 & 0.35 & 0.18 & 0.40 & 0.20 & 0.59 & 0.28 & 0.09 & 0.04 & 0.10 & 0.04 & 0.15 & 0.06 \\ \midrule 
    \multicolumn{19}{c}{Panel (c): Partial cointegration (Model 3)} \\ \midrule
    & \multicolumn{6}{c}{Unrestricted coefficients} & \multicolumn{6}{c}{Signed weights} & \multicolumn{6}{c}{Non-negative weights} \\
    \cmidrule(lr){2-7} \cmidrule(lr){8-13} \cmidrule(lr){14-19}
    & \multicolumn{2}{c}{$\phi=0.2$} & \multicolumn{2}{c}{$0.5$} & \multicolumn{2}{c}{$0.8$} & \multicolumn{2}{c}{$\phi=0.2$} & \multicolumn{2}{c}{$0.5$} & \multicolumn{2}{c}{$0.8$} & \multicolumn{2}{c}{$\phi=0.2$} & \multicolumn{2}{c}{$0.5$} & \multicolumn{2}{c}{$0.8$} \\
    \cmidrule(lr){2-3} \cmidrule(lr){4-5}  \cmidrule(lr){6-7} \cmidrule(lr){8-9} \cmidrule(lr){10-11} \cmidrule(lr){12-13} \cmidrule(lr){14-15} \cmidrule(lr){16-17} \cmidrule(lr){18-19}
    & pre & post & pre & post & pre & post & pre & post & pre & post & pre & post & pre & post & pre & post & pre & post \\ \midrule
    $T_0=50$ & 1.05 & 0.98 & 1.04 & 0.94 & 1.09 & 0.91 & 0.97 & 0.94 & 0.97 & 0.89 & 1.01 & 0.85 & 0.75 & 0.65 & 0.69 & 0.56 & 0.48 & 0.29 \\
    $100$ & 1.05 & 0.99 & 1.02 & 0.94 & 1.03 & 0.89 & 0.95 & 0.86 & 0.92 & 0.82 & 0.93 & 0.77 & 0.73 & 0.56 & 0.66 & 0.47 & 0.43 & 0.26 \\
    $200$ & 1.03 & 0.97 & 0.99 & 0.95 & 0.97 & 0.85 & 0.92 & 0.85 & 0.88 & 0.80 & 0.84 & 0.74 & 0.69 & 0.49 & 0.62 & 0.42 & 0.41 & 0.24 
    \\ \bottomrule
    \end{tabular}
    \caption{Simulation results: MSE ratio }
    \label{tab:sim}
    \caption*{\footnotesize The table reports the ratio of the mean squared error for the synthetic business cycle estimator to that for the conventional synthetic control estimator, shown separately for the pre- and post-treatment periods. }
\end{sidewaystable}

In the unrestricted coefficients specification, Models 1 and 2 involve spurious regressions. The synthetic business cycle estimator consistently outperforms the conventional synthetic control estimator, achieving substantial MSE reductions once the pre-treatment sample reaches 100 or 200 observations. With only 50 pre-treatment observations, however, finite-sample issues slightly reverse this ranking. In Model 3—where half of the donor units are cointegrated with the treated unit, a setting for which the method of \citet{masini2022counterfactual} is specifically tailored—the synthetic business cycle estimator nonetheless matches or modestly outperforms in terms of MSE.

When we impose sum-to-unity and non-negativity constraints on the weights, the contrast becomes even sharper. Across all three models, the synthetic business cycle estimator yields much lower MSE, often reducing it by a factor of ten in the spurious-regression cases of Models 1 and 2. Even in the cointegrated setting of Model 3, it achieves reductions in MSE of 30\%–70\%. This indicates that the detrending procedure may make the pseudo-true weights closer to the simplex.

Finally, the impact of the autoregressive parameter $\phi$ differs between Models 2 and 3. In Model 2, a larger $\phi$ increases the persistence of the stationary factor, bringing the design closer to cointegration and slightly improving the conventional estimator, yet the synthetic business cycle estimator retains its advantage. In Model 3, when $\phi$ is small, both methods handle nonstationarity similarly well; when $\phi$ is large, the conventional one-step synthetic control ignores information in the stationary component, so the synthetic business cycle estimator remains more efficient.

These simulation exercises show the robustness in finite samples under both the spurious regression and cointegration. The filtering step effectively estimates the trend component. Compared to the other methods, we have a second ensemble learning step which further digests the information from the stationary component.

\section{Empirical Applications} \label{sec:empirical}

This section revisits two well-known empirical settings commonly analyzed using synthetic control and vertical regression methods: the economic impact of German reunification in 1990 and the return of Hong Kong to China in 1997. In both applications, the outcome of interest is per capita GDP, a nonstationary time series that our proposed methodology is specifically designed to address.

\subsection{German Reunification}
We revisit the seminal comparative case study by \cite{abadie2015comparative} on the economic impact of the 1990 German reunification on West Germany. Following the fall of the Berlin Wall in November 1989, the German Democratic Republic and the Federal Republic of Germany officially reunified on October 3, 1990. At the time of reunification, per capita GDP in West Germany was approximately 3 times higher than that of East Germany. 



\cite{abadie2015comparative} used 16 OECD countries as control units to construct a synthetic West Germany.\footnote{The 16 countries are Australia, Austria, Belgium, Denmark, France, Greece, Italy, Japan, the Netherlands, New Zealand, Norway, Portugal, Spain, Switzerland, the United Kingdom, and the United States.} The analysis is based on annual country-level data spanning 1960 to 2003. The pre-reunification period (1971–1990) was used to estimate the synthetic weights, while the post-reunification period (1991–2003) was used to evaluate the economic impact of German reunification on West Germany’s GDP. Their study finds that reunification had a substantial negative effect on West Germany’s GDP per capita. 

Figure \ref{fig:Germany_gdp_trend_cyc}(a) plots the raw GDP per capita of West Germany alongside that of the 16 OECD donor countries from 1960 to 1990. The Germany’s trajectory lies well within the donor envelope and shows no systematic divergence from the peer group throughout the pre-unification period. Most OECD countries, including West Germany, exhibit exponential GDP growth, though some experience a slowdown in growth rates as they approach 1990.

Figure \ref{fig:Germany_gdp_trend_cyc} (b) and (c) present the corresponding trend and cyclical components obtained with the Hamilton filter using pre-unification data. Both plots show a similar visual impression from the raw series: West Germany’s long‑run trend and short‑run fluctuations track those of several donor countries closely, lending further support to the appropriateness of a synthetic control design for assessing the effects of reunification. However, as we stress in this paper, the apparent comovement of countries’ GDP trends in the raw data may be spurious and should be interpreted with caution.

\begin{figure}[htbp!]
  \centering
  \begin{subfigure}{0.5\textwidth}
    \centering
\includegraphics[width=\textwidth]{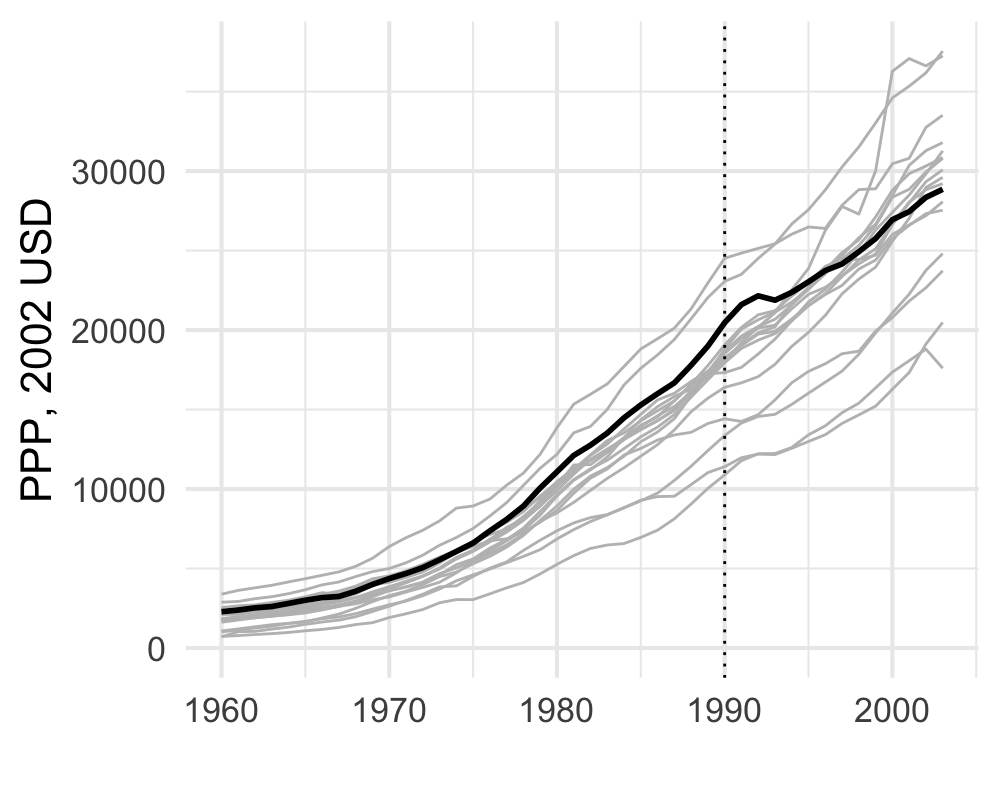}
    \caption{Raw per capita GDP}
  \end{subfigure}


  \begin{subfigure}{0.49\textwidth}
    \centering
 \includegraphics[width=\textwidth]{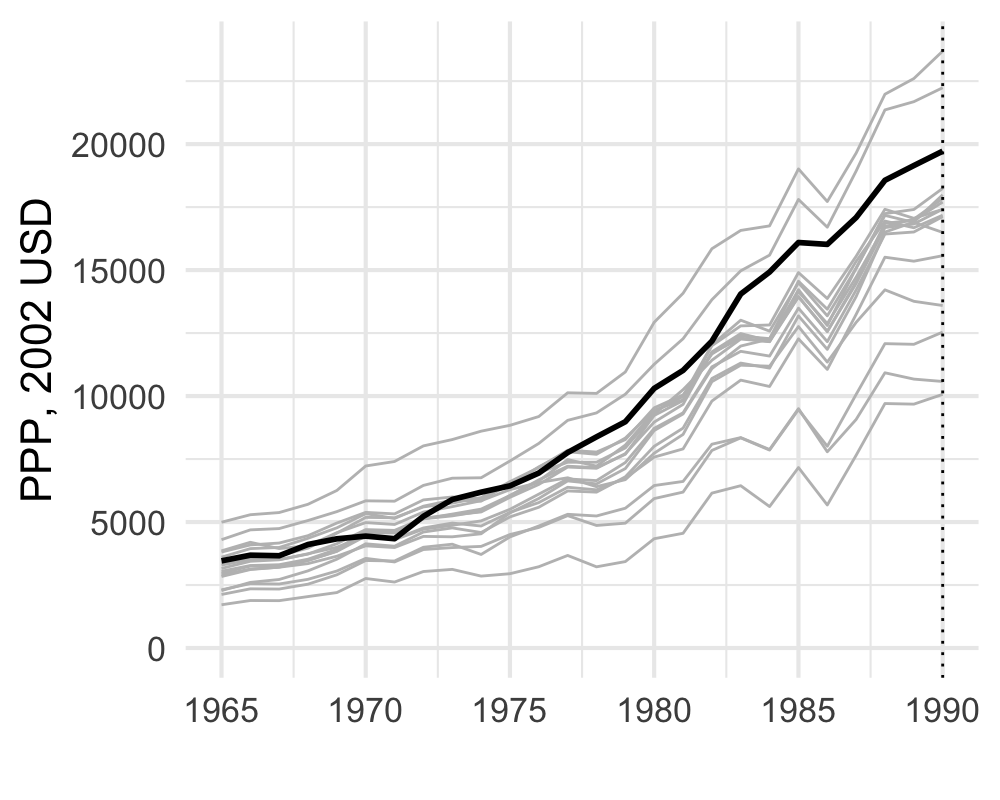}
    \caption{Trend components}
  \end{subfigure}
  \hfill
  \begin{subfigure}{0.49\textwidth}
    \centering
\includegraphics[width=\textwidth]{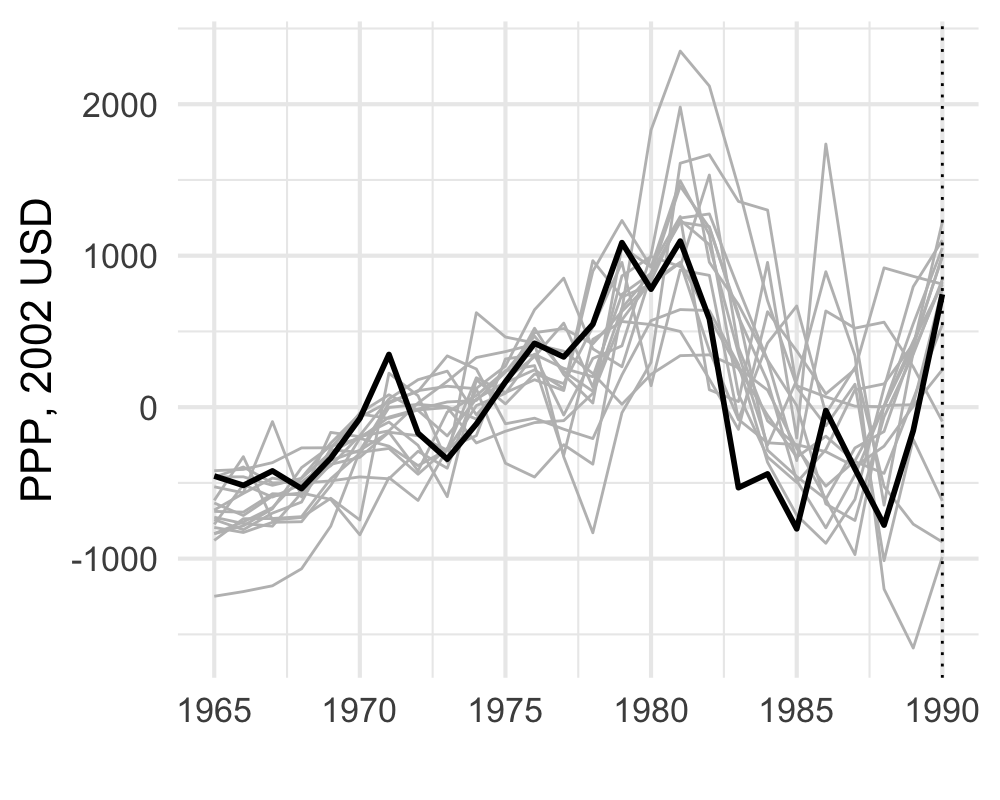}
    \caption{Cyclical components}
  \end{subfigure}
\caption{GDP path decomposition for Germany and donor countries}
  \label{fig:Germany_gdp_trend_cyc}
  \caption*{\footnotesize The thick black curve represents (West) Germany’s GDP per capita, while the grey curves correspond to the donor countries. Panel (a) shows the raw data; panels (b) and (c) display the trend and cyclical components, respectively, extracted with the Hamilton filter applied to the pre-unification data, using horizon $h=4$ years and $p=2$ lags.
}
\end{figure}

We apply the proposed synthetic business cycle (SBC) estimator and, for comparison, a conventional synthetic control (SC) estimator constructed from the raw GDP series. Note that this baseline estimator differs from the richer specification in \citet{abadie2015comparative}, which derives the weights based not only on pre-treatment outcomes but also on additional growth predictors such as inflation, trade openness, schooling, and industry share. Consequently, although the two approaches yield qualitatively similar patterns, our ``conventional synthetic control'' results are not identical to those of \citet{abadie2015comparative}; the divergence reflects our focus on the more recent matrix completion perspective of synthetic control \citep{athey2021matrix} rather than on a full covariate match.

Figure \ref{fig:GermanySBC} shows the per capita GDP of West Germany before and after the 1990 reunification, along with the two synthetic trajectories. Each method provides a four-year forecast of West Germany’s economy under the counterfactual scenario in which reunification did not occur after 1990. There are several findings. First, both estimators fit the pre-reunification data closely; although the conventional model matches the raw series slightly better, this tighter fit may reflect spurious comovements with the donor pool. Second, the synthetic business cycle estimator indicates a more substantial negative economic impact of reunification than the conventional estimator. Third, both approaches reveal a brief initial boost to per capita GDP after reunification, but the synthetic business cycle method confines this boom to roughly one year, whereas the conventional model extends it to three years. The shorter boom period aligns more closely with earlier studies \citep{MeinhardtEtAl1995}.


\begin{figure}[htbp!]
  \centering
  \begin{subfigure}[b]{0.49\textwidth}
    \includegraphics[width=\textwidth]{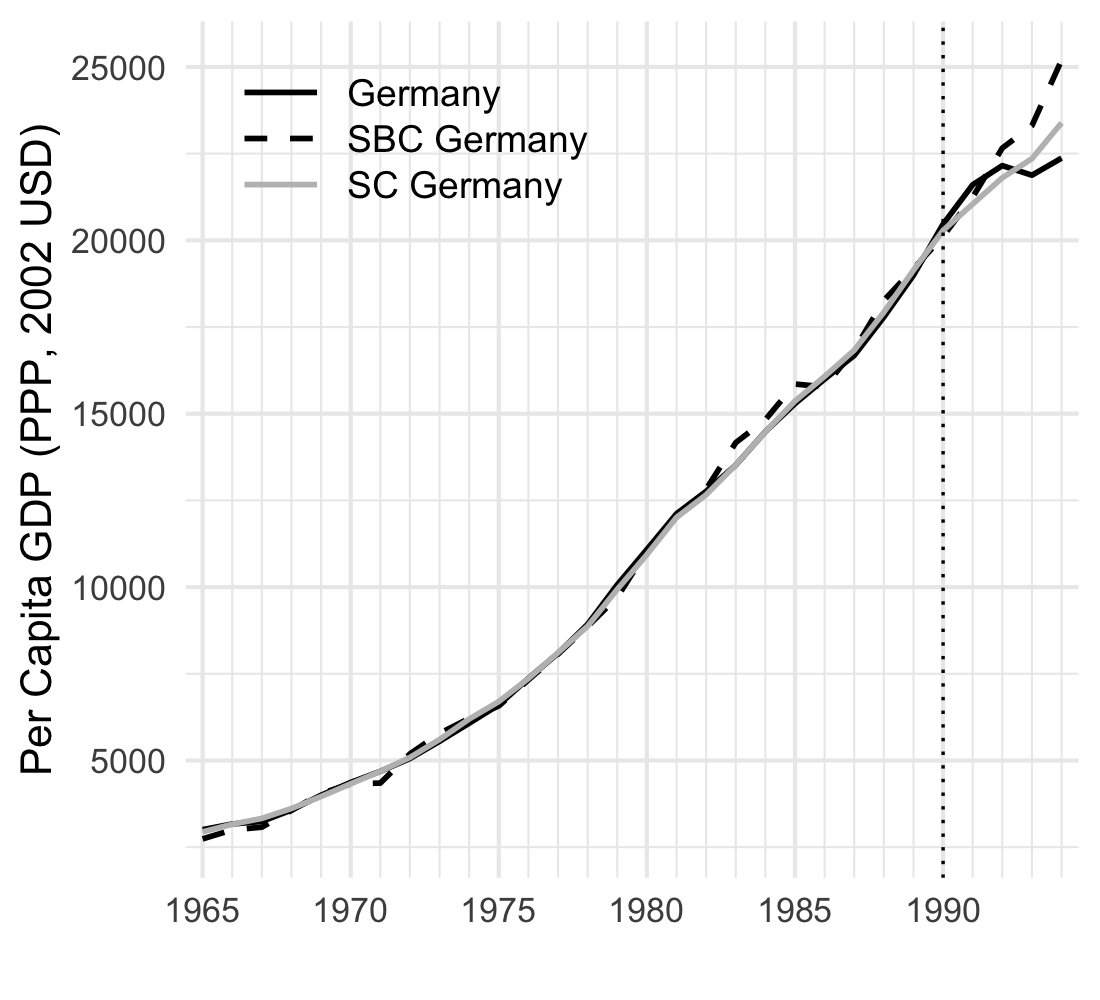}
    \caption{Synthetic and actual GDP}
  \end{subfigure}
  \hfill
  \begin{subfigure}[b]{0.49\textwidth}
    \includegraphics[width=\textwidth]{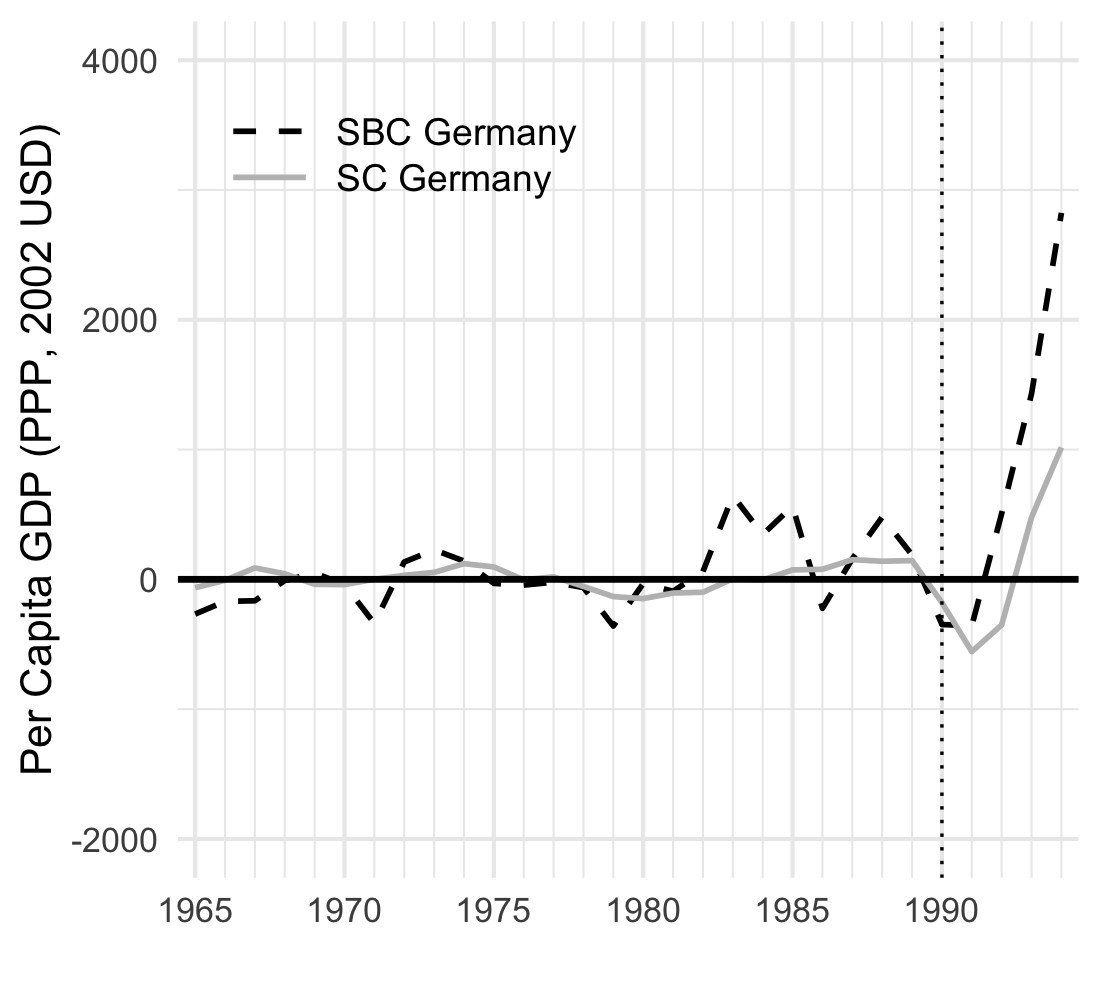}
    \caption{Deviation from actual GDP}
  \end{subfigure}
\caption{Treatment effect estimates of German reunification}
  \label{fig:GermanySBC}
  \caption*{\footnotesize Panel (a) shows Germany’s synthetic GDP estimated using both the synthetic business cycle (SBC) estimator and the conventional synthetic control (SC) estimator, compared with the actual GDP. Panel (b) shows the differences between the two synthetic GDP trajectories and the actual GDP.}
\end{figure}

\begin{figure}[h]
    \centering
    \includegraphics[width=0.9\linewidth]{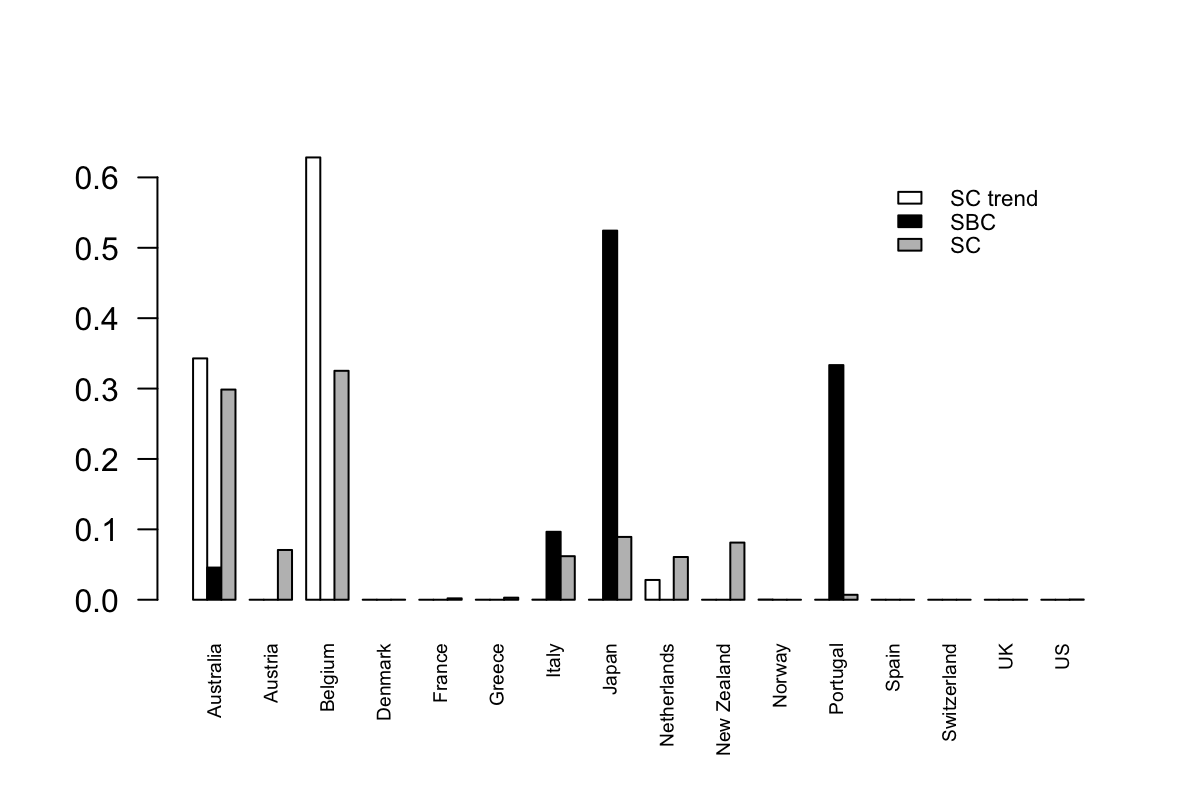}
    \caption{Comparison of synthetic weights}
    \label{fig:GermanyWeightsComparison}
    \caption*{\footnotesize Weights assigned to donor countries using three different approaches. SC (grey) represents the synthetic control method applied to the raw GDP series. SC trend (white) and SBC (black) represent synthetic control applied to the decomposed trend and cyclical components, respectively.}
\end{figure}

To investigate the source of the discrepancy between the standard synthetic control and the synthetic business cycle estimators, we compare the donor weights generated by the two procedures. Figure \ref{fig:GermanyWeightsComparison} reports three sets of weights: (i) the conventional synthetic control weights based on the raw series; (ii) the weights assigned to the cyclical component by the synthetic business cycle method; and (iii) a ``synthetic trend'' weight vector that best reproduces West Germany’s trend component using the donors’ trends.

Examining these weights reveals a sharp contrast. The synthetic trend specification places nearly all of its weight on Australia and Belgium, indicating that these two economies most closely match West Germany’s long-run growth path. By contrast, the synthetic business cycle weights fall mainly on Italy, Japan, and Portugal, whose short-run fluctuations align more closely with West Germany’s business cycle. When applied to the unfiltered data, the conventional synthetic control effectively averages across these distinct structures: it assigns weight both to the trend matching donors (Australia and Belgium) and to the cycle matching donors (Italy, Japan, and Portugal), with the trend countries receiving larger weights because the trend component dominates the cyclical component in magnitude.

These patterns imply that West Germany’s trend and cyclical components correspond to different donor subsets, reflecting two distinct factor structures. The conventional approach, which presumes a single common structure and applies a single set of weights, risks combining spurious trend comovements with genuine cyclical similarities. Our synthetic business cycle estimator avoids this pitfall by forecasting West Germany’s trend from its own history and pairing it with a synthetic cycle drawn from the donor pool. This separation of trend and cycle yields a more robust counterfactual prediction.

In Figure \ref{fig:Germany_placebo}, we perform the placebo exercise as in \cite{abadie2015comparative}. We re‑estimate both the conventional synthetic control and our synthetic business cycle method, artificially shifting the reunification date to 1975, fifteen years before the actual event. With this placebo reunification timing, the synthetic business cycle forecasts remain closely aligned with the observed data throughout 1975–1990, whereas the forecasts from the conventional synthetic control estimator exhibit a modest downward drift.\footnote{By contrast, the richer specification in \citet{abadie2015comparative}, which incorporates additional non‑outcome predictors, performs well in their placebo test.} Viewed from the matrix‑completion perspective, these results suggest that the synthetic business cycle estimator can offer a more robust alternative to the conventional synthetic‑control approach when only outcome paths are used for causal inference.


\begin{figure}[htbp!]
  \centering
  \begin{subfigure}[b]{0.49\textwidth}
    \includegraphics[width=\textwidth]{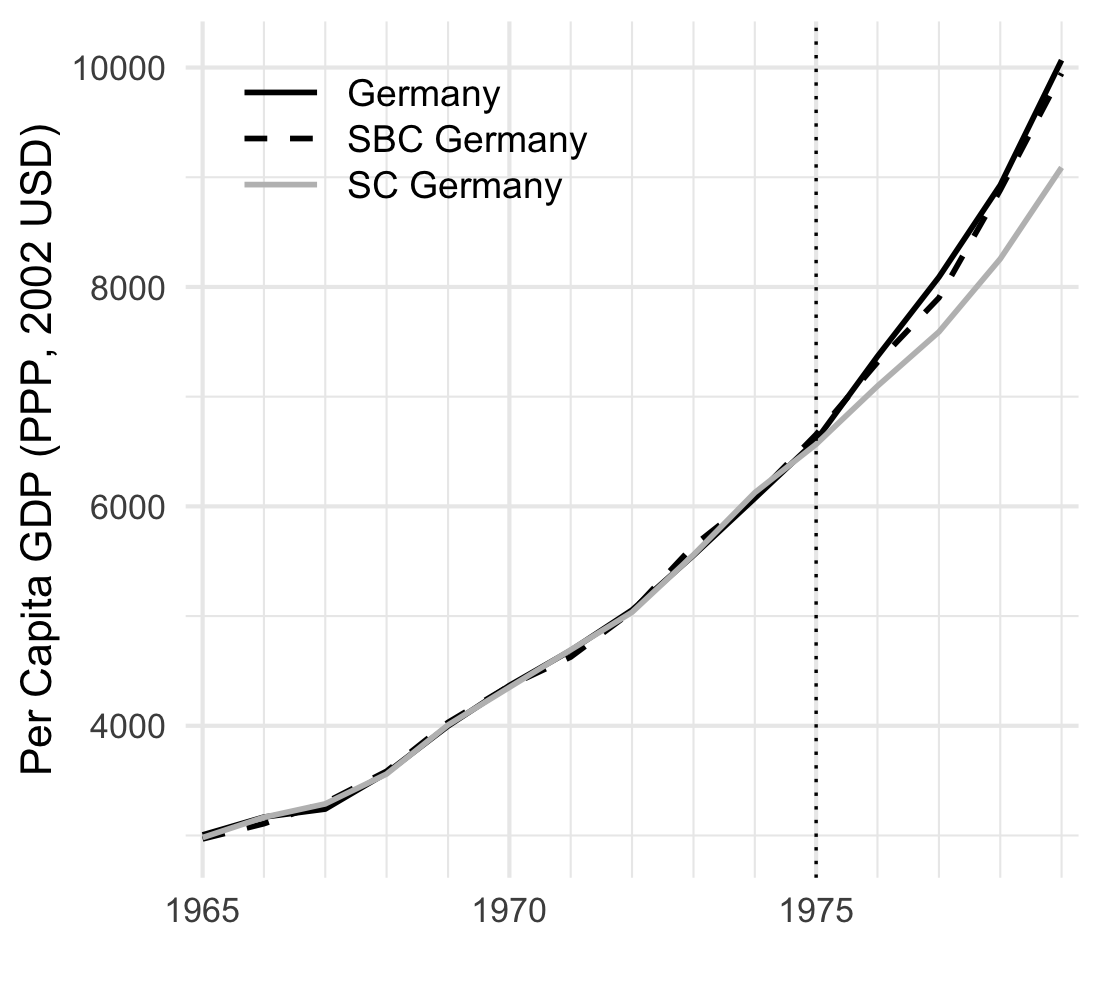}
    \caption{Synthetic and actual GDP}
  \end{subfigure}
  \hfill
  \begin{subfigure}[b]{0.49\textwidth}
    \includegraphics[width=\textwidth]{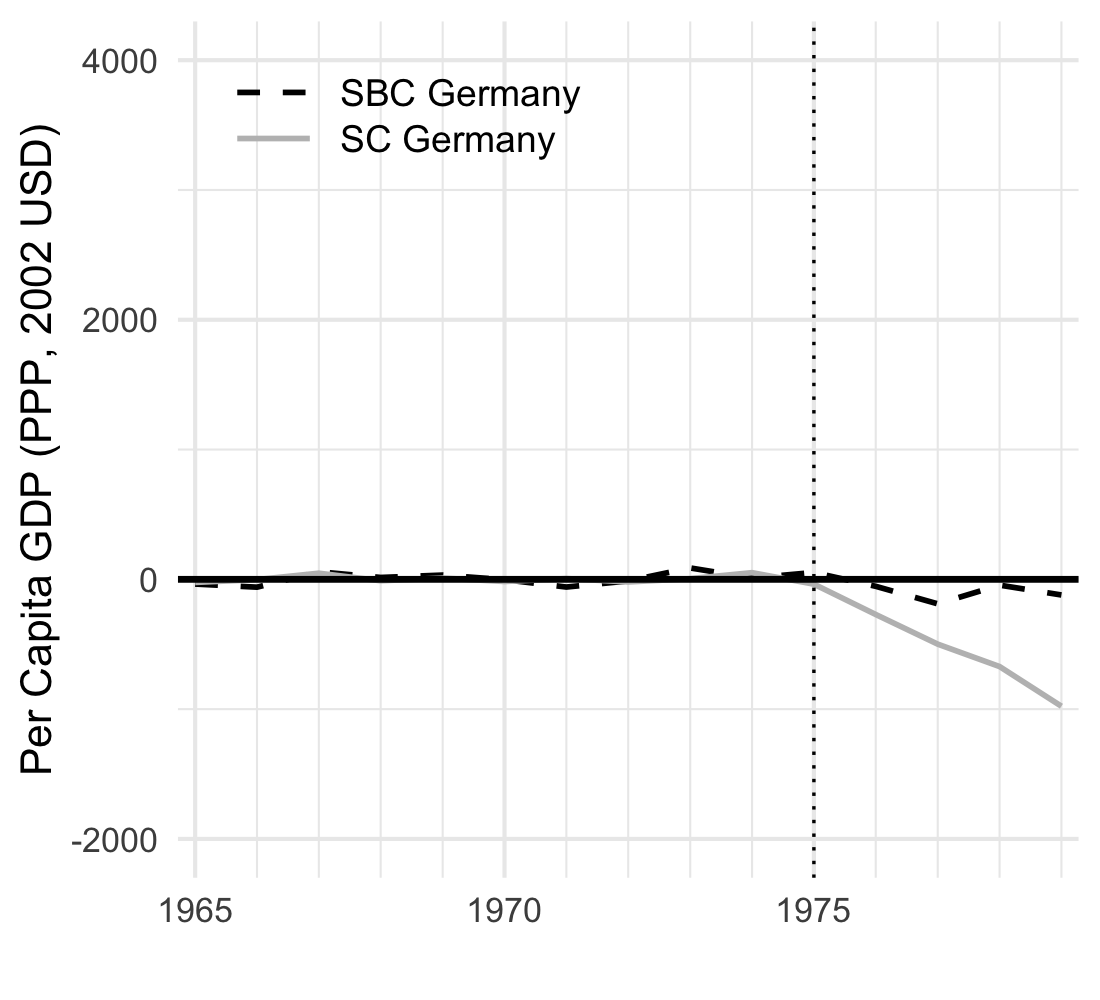}
    \caption{Deviation from actual GDP}
  \end{subfigure}
  \caption{Placebo reunification in 1975}
  \label{fig:Germany_placebo}
  \caption*{\footnotesize This figure applies the synthetic business cycle (SBC) and the conventional synthetic control (SC) method to the placebo reunification dated 1975. Panel (a) shows Germany’s synthetic GDP estimated using both the synthetic business cycle estimator and the conventional synthetic control estimator, compared with the actual GDP. Panel (b) shows the differences between the two synthetic GDP trajectories and the actual GDP.}
\end{figure}

\subsection{Return of Hong Kong to China} \label{sec:HongKong}

China resumed sovereignty over Hong Kong on July 1, 1997, ending British colonial rule. \cite{hsiao2012panel} used the vertical regression method to study the effect on Hong Kong's economic growth rate, which is a stationary series. Here, we examine the raw GDP series as the outcome, similar to the previous empirical exercise, highlighting the synthetic business cycle's power of handling nonstationary outcomes.

We work with annual GDP levels for 1961–2003 sourced from Federal Reserve Economic Data (FRED).\footnote{Our dataset differs from that in \citet{hsiao2012panel}, who use quarterly GDP‑growth data that start later.} Because the pre-treatment window is short, we confine the donor pool to 11 developed, market‑oriented economies—Australia, Austria, Canada, Denmark, France, Germany, Italy, Korea, the Netherlands, New Zealand, and the US—that exhibit similar long‑run GDP trajectories to Hong Kong.\footnote{These economies are drawn from the original list of 23 in \citet{hsiao2012panel}. We also omit Finland, Norway, and Switzerland, whose high per‑capita GDP and welfare‑state structures make them less comparable with Hong Kong.} The UK and mainland China are excluded, as both were directly involved in Hong Kong’s 1997 return and are therefore unsuitable for counterfactual analysis.

Our selection differs from that of \citet{hsiao2012panel}, who emphasize geographic proximity and economic ties and select as donors mainland China, Indonesia, Japan, Korea, Malaysia, the Philippines, Singapore, Taiwan, Thailand, and the US. We adopt a different strategy for two reasons. First, economies that are closely linked to Hong Kong may themselves have been influenced by the 1997 return, weakening their validity as controls. Second, several Southeast Asian economies were hit hard by the 1997–1998 Asian financial crisis, introducing confounding shocks that could obscure the effect of the return. As a robustness check, we present an analysis using the donor economies used in \cite{hsiao2012panel} in Appendix \ref{sec:additional-empirical}.

\begin{figure}[htbp!]
  \centering
  \begin{subfigure}{0.5\textwidth}
    \centering
\includegraphics[width=\textwidth]{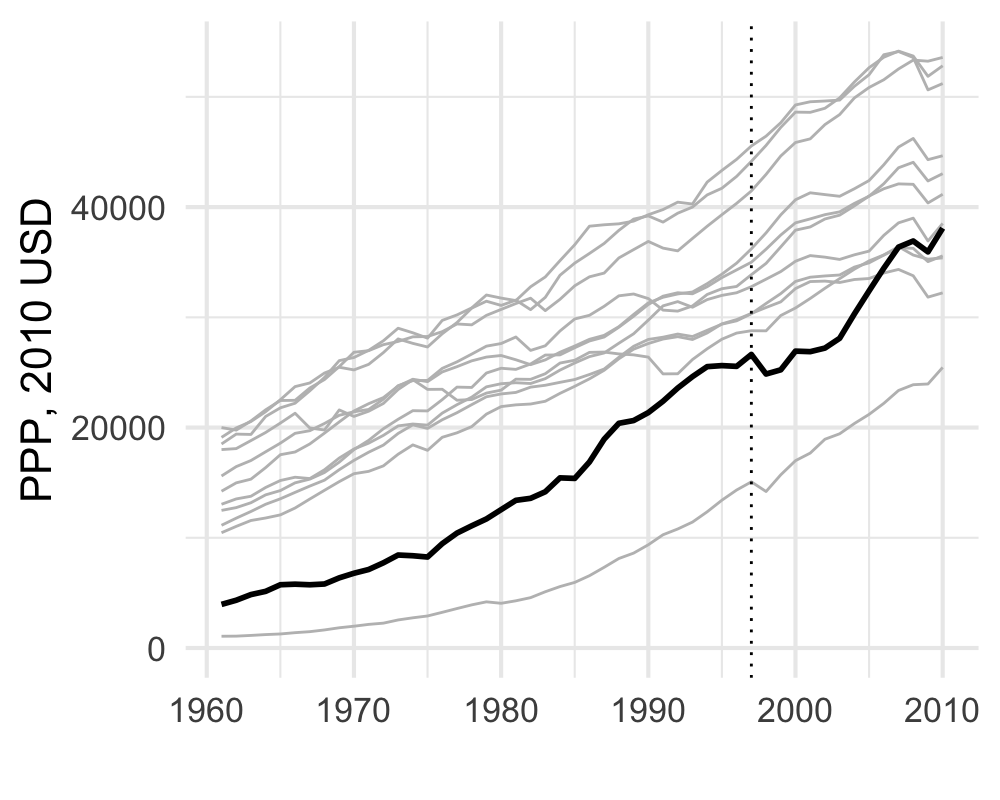}
    \caption{Raw per capita GDP}
  \end{subfigure}


  \begin{subfigure}{0.49\textwidth}
    \centering
 \includegraphics[width=\textwidth]{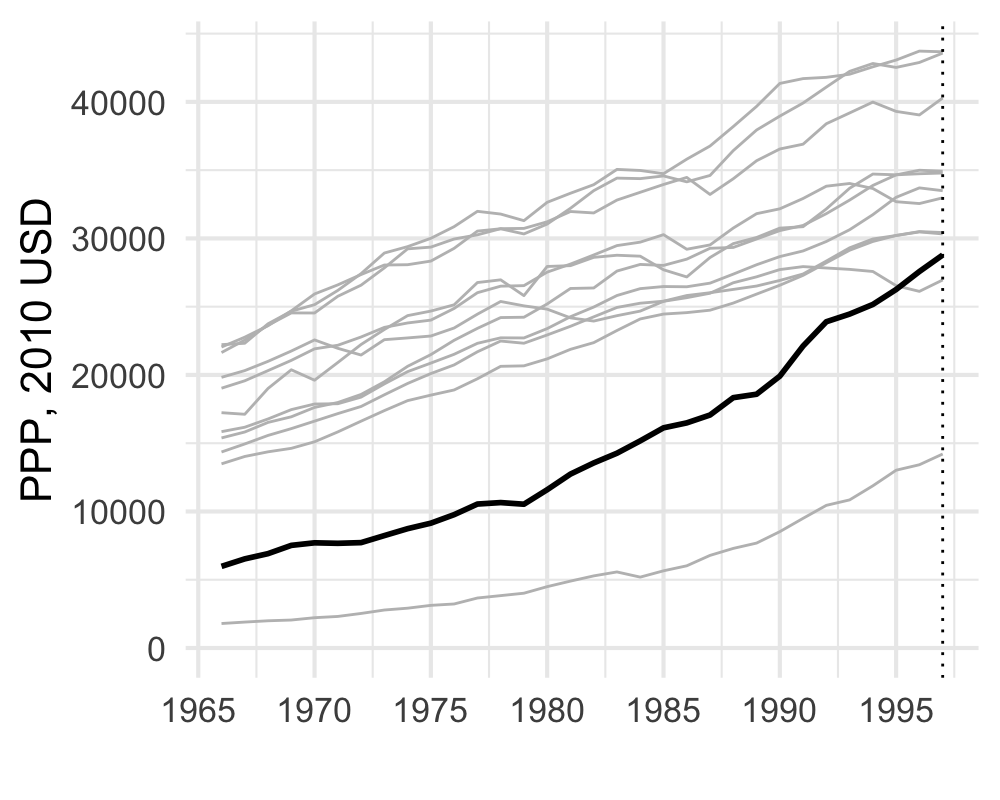}
    \caption{Trend components}
  \end{subfigure}
  \hfill
  \begin{subfigure}{0.49\textwidth}
    \centering
\includegraphics[width=\textwidth]{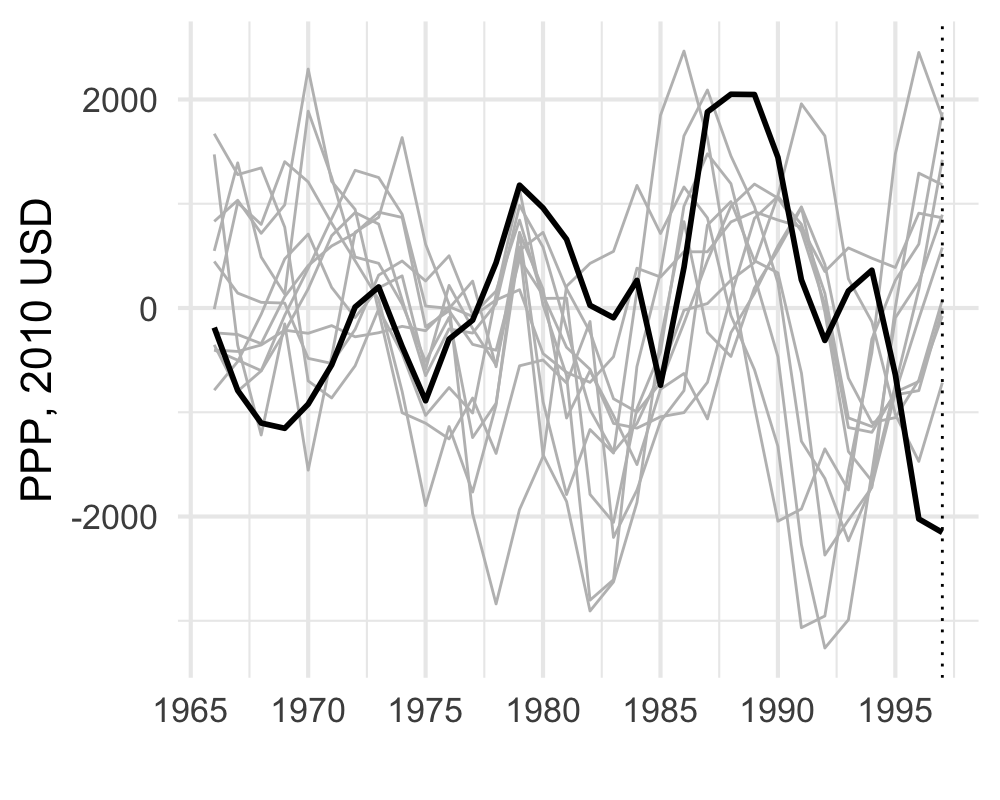}
    \caption{Cyclical components}
  \end{subfigure}
  \caption{GDP path decomposition for Hong Kong and donor countries}
  \label{fig:HK_gdp_trend_cyc}
  \caption*{\footnotesize The thick black curve represents Hong Kong's GDP per capita, while the grey curves correspond to the donor countries. Panel (a) shows the raw data; panels (b) and (c) display the trend and cyclical components, respectively, extracted with the Hamilton filter applied to the pre-unification data, using horizon $h=4$ years and $p=2$ lags.
}
\end{figure}

Figure \ref{fig:HK_gdp_trend_cyc} plots Hong Kong’s per capita GDP alongside that of the eleven donor economies, together with the corresponding trend and cyclical components. Because \citet{hsiao2012panel} employ vertical regression, placing no restrictions on the weights, we implement the analysis under two specifications: with and without the non-negativity constraint on the weights.\footnote{In both configurations, the weights still sum to one. This constraint is required to handle a common nonstationary factor with identical loadings across all units. See the data-generating process in Appendix B of \cite{abadie2010synthetic}.} Relaxing the non‑negativity constraint is sensible, as Figure \ref{fig:HK_gdp_trend_cyc}(c) shows that some donors’ cycles are negatively correlated with Hong Kong’s.

\begin{figure}[htbp!]
  \centering
  \begin{subfigure}{0.49\textwidth}
    \includegraphics[width=\textwidth]{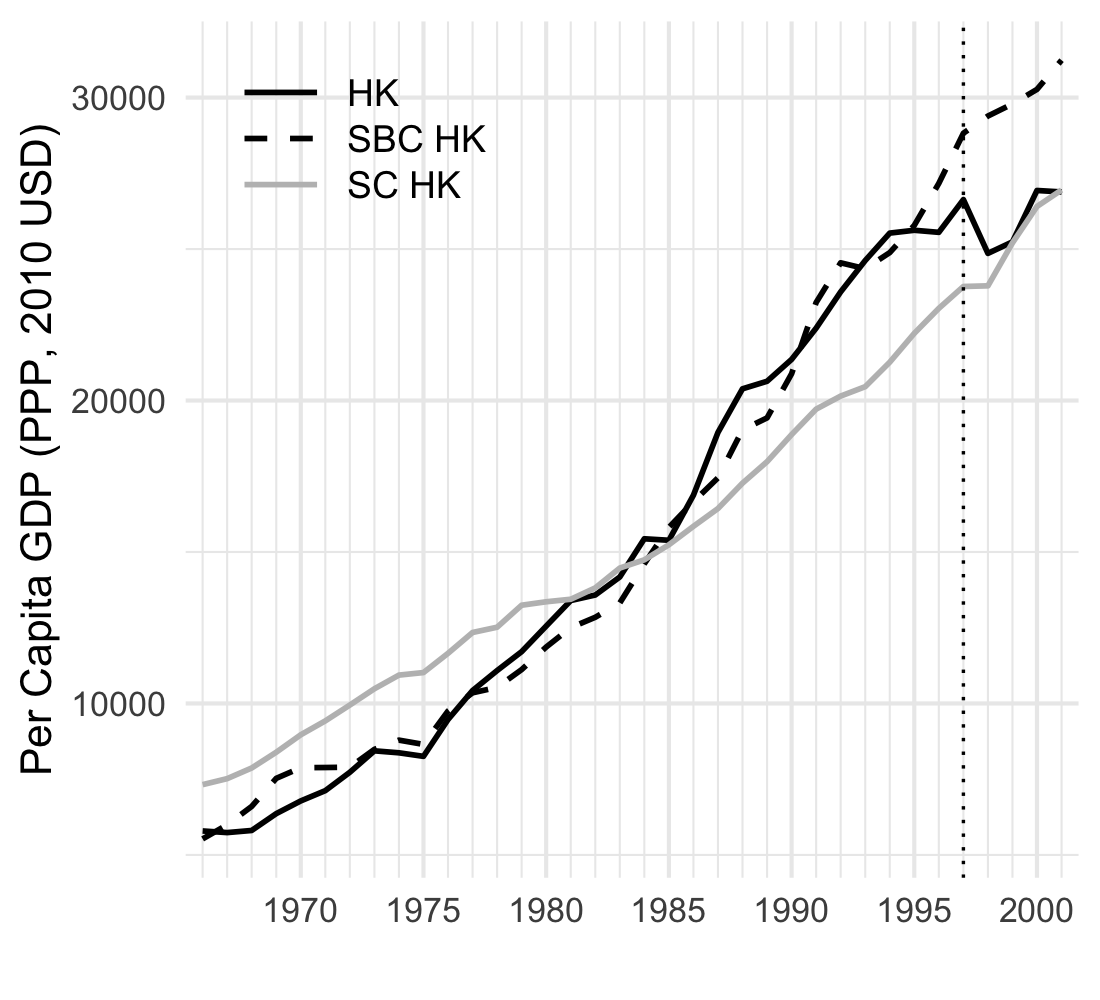}
    \caption{Non-negative weights}
  \end{subfigure}
  \hfill
  \begin{subfigure}{0.49\textwidth}
    \includegraphics[width=\textwidth]{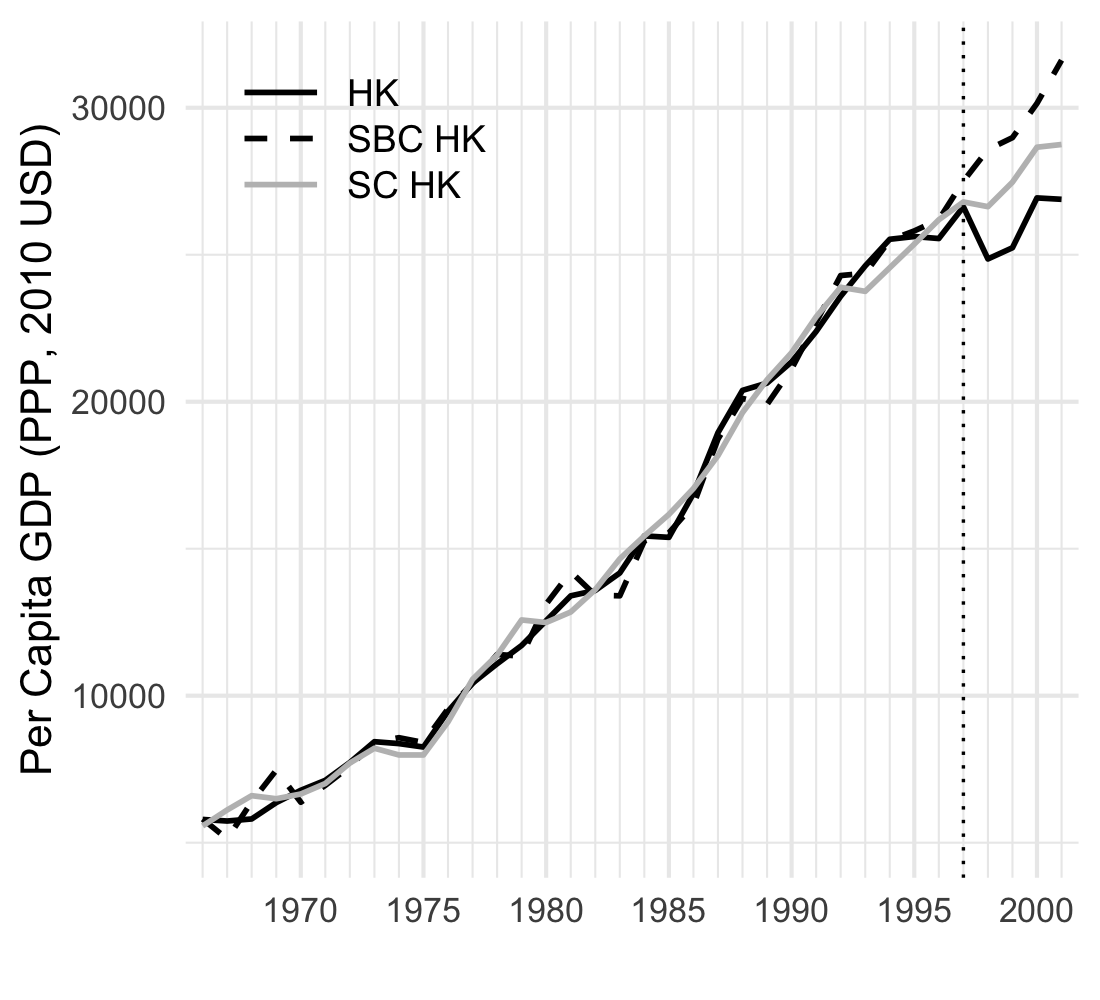}
    \caption{Signed weighted}
  \end{subfigure}
  \caption{Treatment effect estimates of return of Hong Kong}
  \label{fig:hk-treatment-effect}
  \caption*{\footnotesize Hong Kong's synthetic GDP estimated using both the synthetic business cycle estimator and the conventional synthetic control estimator, compared with the actual GDP. Panels (a) and (b) impose and relax the non-negativity constraint on the weights, respectively.}
\end{figure}

Figure \ref{fig:hk-treatment-effect} shows the counterfactual predictions from the synthetic business cycle estimator and the conventional synthetic control estimator. Two main findings emerge. First, the synthetic business cycle estimator provides a good pre-treatment fit using both non-negative and signed weights, whereas the conventional synthetic control estimator fits well only with signed weights, highlighting the greater flexibility of the synthetic business cycle approach. Second, the synthetic business cycle estimator yields a higher estimate of the economic cost of the handover.

Figure \ref{fig:hk-weights} compares the weights for the raw data, trend, and cycle components. Consistent with the previous empirical analysis, we find that the synthetic control weights for the raw data approximate a combination of the trend and cycle weights. Notably, when weights are not constrained to be non-negative, the trend and cycle weights can have opposite signs, resulting in overall synthetic weights for the raw data that are small in magnitude due to offsetting effects.
 
\begin{figure}[htbp!]
  \centering
  \begin{subfigure}{0.49\textwidth}
    \includegraphics[width=\textwidth]{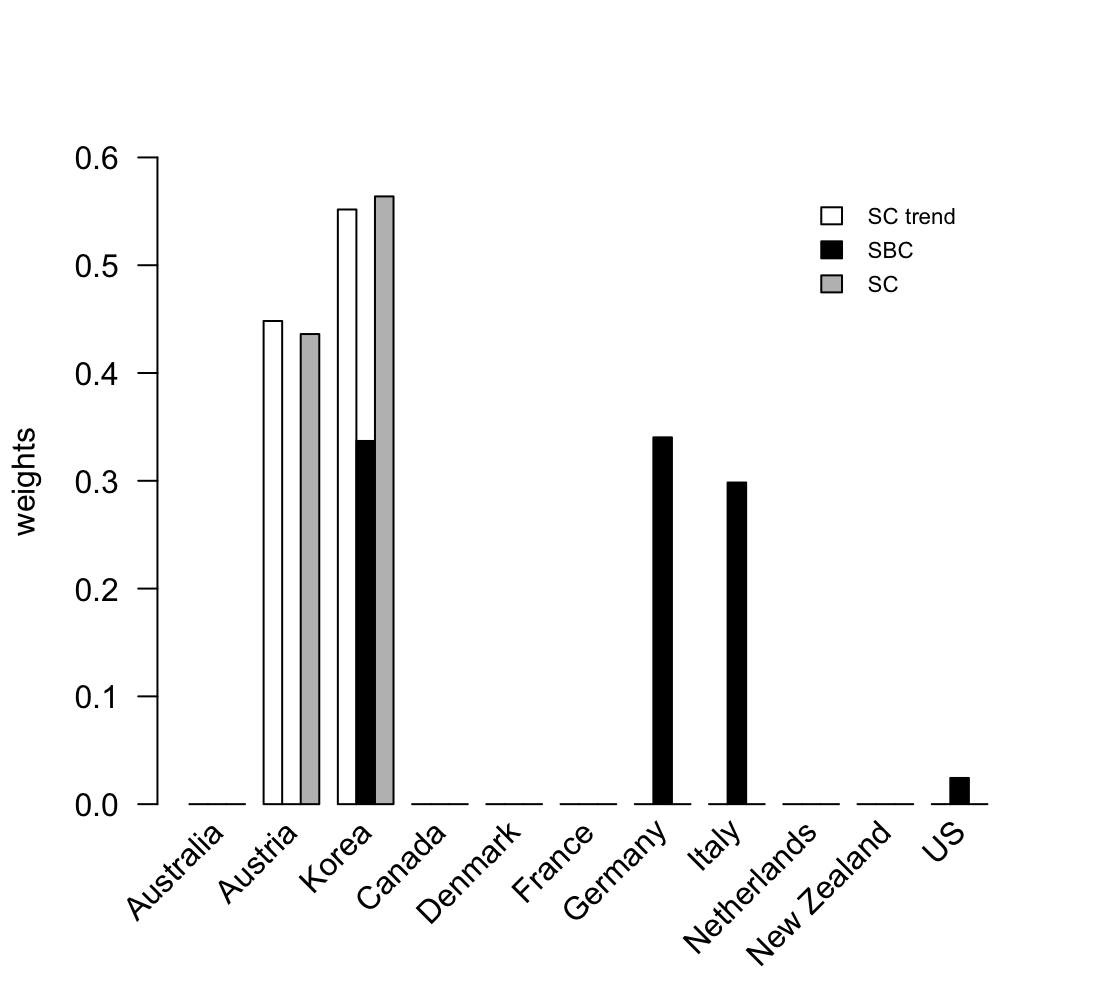}
    \caption{Non-negative weights}
  \end{subfigure}
  \hfill
  \begin{subfigure}{0.49\textwidth}
    \includegraphics[width=\textwidth]{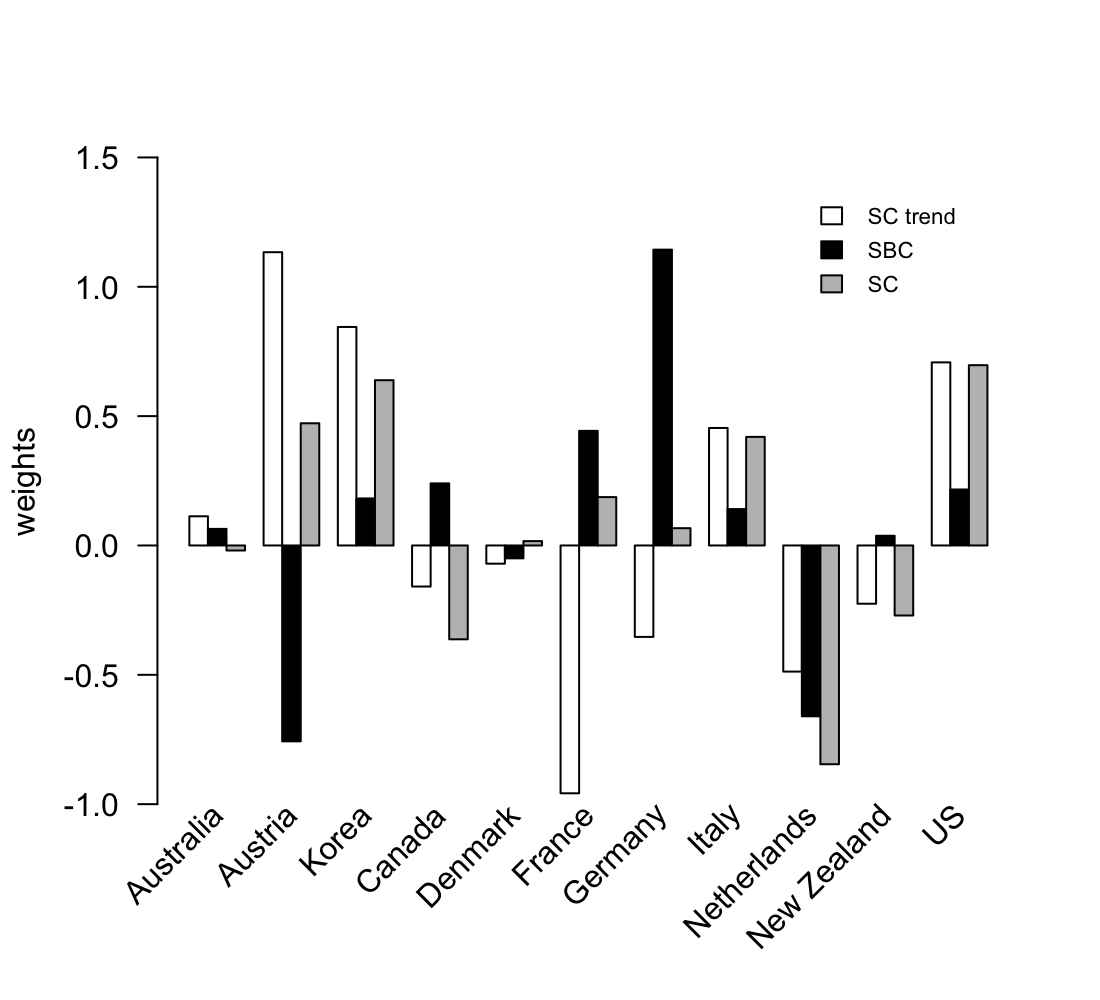}
    \caption{Signed weights}
  \end{subfigure}
  \caption{Comparison of synthetic weights}
  \label{fig:hk-weights}
  \caption*{\footnotesize Weights assigned to donor economies using three different approaches. SC (grey) represents the synthetic control method applied to the raw GDP series. SC trend (white) and SBC (black) represent synthetic control applied to the decomposed trend and cyclical components, respectively. Panels (a) and (b) impose and relax the non-negativity constraint on the weights, respectively.}
\end{figure}

Figure \ref{fig:hk-placebo} shows the placebo test with the treatment date set to 1987, ten years before the actual event. The synthetic business cycle forecasts remain closely aligned with the observed data throughout 1987–1997, whereas the forecasts from the conventional synthetic control estimator exhibit a downward deviation, which is more pronounced when weights are restricted.

\begin{figure}[htbp!]
  \centering
  \begin{subfigure}{0.49\textwidth}
    \includegraphics[width=\textwidth]{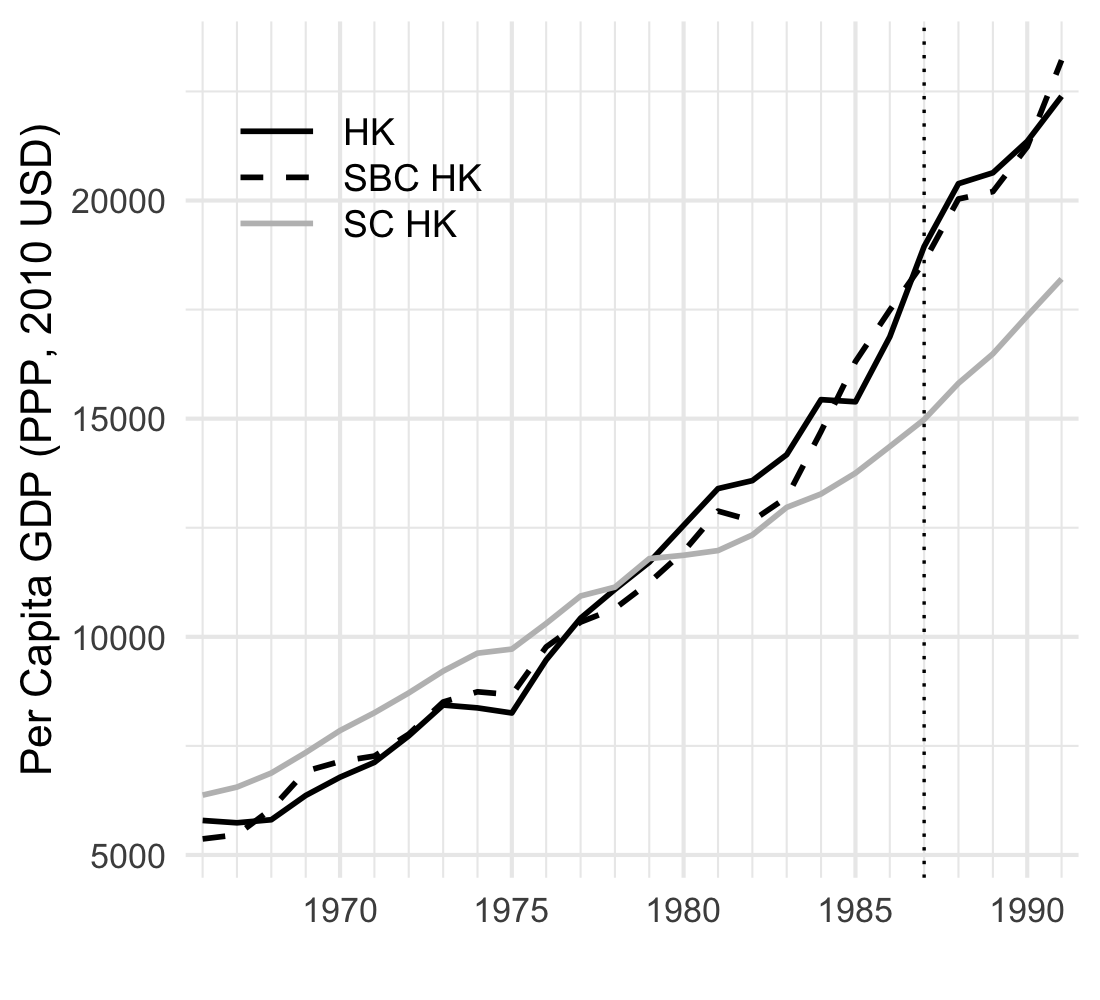}
    \caption{Non-negative weights}
  \end{subfigure}
  \hfill
  \begin{subfigure}{0.49\textwidth}
    \includegraphics[width=\textwidth]{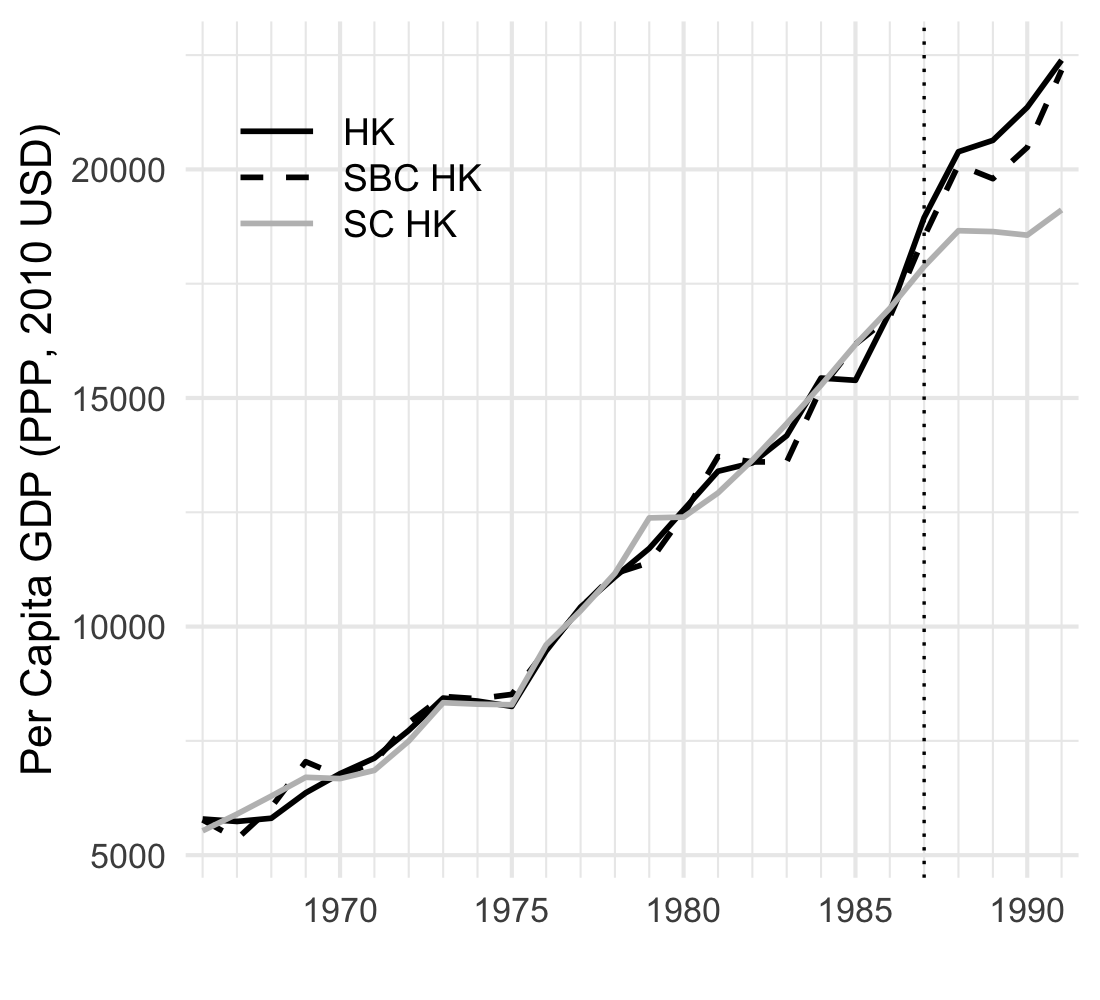}
    \caption{Signed weights}
  \end{subfigure}
  \caption{Placebo return of Hong Kong in 1987}
  \label{fig:hk-placebo}
  \caption*{\footnotesize This figure applies the synthetic business cycle and the conventional synthetic control method to the placebo return of Hong Kong dated 1987. Panels (a) and (b) impose and relax the non-negativity constraint on the weights, respectively.}
\end{figure}

\section{Conclusion} \label{sec:conclusion}

In this paper, we identify a spurious synthetic control problem: the conventional synthetic control estimator can spuriously attribute common movements to treatment effects when outcomes follow nonstationary trends. To remedy this, we introduce the synthetic business cycle estimator, which forecasts the treated unit’s trend from its own pre-treatment trend and its cycle from weighted donor cycles, and prove its asymptotic unbiasedness even with idiosyncratic trends. Empirical applications to German reunification and the return of Hong Kong confirm its superior accuracy in counterfactual analyses.

The synthetic business cycle estimator extends the panel causal inference toolkit and offers a reliable alternative for policy evaluation with nonstationary data. We recommend that empirical researchers use this method when conducting counterfactual analyses of potentially nonstationary economic data.

The framework in this paper can be extended in a variety of directions. First, for simplicity we follow the literature to consider the ``fixed-$n$, large-$T$'' asymptotics. It would be important to generalize it to the case when the number of cross-section units also diverges to infinity, which will require a delicate analysis of the relative magnitude of $n$ and $T$, and may further involve regularization schemes for dimensional reduction. Second, if the data possess either multiple treated units or a long post-treatment period, then it is possible to invoke a stable limit distribution to quantify the statistical uncertainty of the average treatment effect. The asymptotic distribution will count on the relative rates of convergence of the trend estimation and the linear ensemble of the estimated cyclical components.
These are interesting topics for future research.

\bigskip
\bigskip

\bibliographystyle{chicago}
\bibliography{sc}

\newpage

\appendix

\numberwithin{equation}{section}
\numberwithin{figure}{section}
\numberwithin{lemma}{section}

\section{Proofs for Theoretical Results}
Before we prove our main result, it is useful to first show the consistency of the synthetic weights using the estimated cyclical components. 

\begin{lemma}\label{lm:w-consistency}
Given Assumption \ref{asm:c-factor}-\ref{asm:error convergence}, the proposed synthetic business cycle estimator has the following convergent result: for $i = 2,\dots, N+1$,
\begin{align*}
    \hat{w}_i - w_i^0 \overset{p}{\rightarrow} c_i,
\end{align*}
where $c_i$'s are finite constants, and $w_i^0$'s are the infeasible synthetic weights if $\lambda_i$ and pre-treatment factors $f_t$ were directly observable, i.e., 
\begin{align*}
    \lambda_{1}f_t = \sum_{i = 2}^{N+1} w_i^0\lambda_if_t.
\end{align*}
\end{lemma}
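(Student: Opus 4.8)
The plan is to recognize $(\hat w_2,\dots,\hat w_{N+1})$ as the coefficient vector of the no-intercept least-squares regression of the estimated treated cycle $\hat c_{1,t}$ on the estimated control cycles, and then to characterize its probability limit as the solution of an attenuated (errors-in-variables) normal equation. Collect the control cycles in $\hat C_t = (\hat c_{2,t},\dots,\hat c_{N+1,t})'$ and write $\hat w = \big(\sum_t \hat C_t \hat C_t'\big)^{-1}\sum_t \hat C_t \hat c_{1,t}$, with all sums over $t = h+p,\dots,T_0$. Let $C_t = (c_{2,t},\dots,c_{N+1,t})'$ be the population cycles, $\hat U_t = \hat C_t - C_t$ the stacked filtering errors $\hat u_{i,t}$, $\Lambda$ the $N\times L$ matrix stacking the control loadings $\lambda_2,\dots,\lambda_{N+1}$, and $D_\varepsilon = \mathrm{diag}(\sigma_2^2,\dots,\sigma_{N+1}^2)$ the diagonal matrix of idiosyncratic variances. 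The argument proceeds in three moves: show the filtering error washes out of the sample second moments, compute the limits of the ``clean'' moments from the factor structure, and read off the constant.

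First I would show that replacing $\hat C_t$ by $C_t$ is asymptotically innocuous. A law of large numbers for the weakly stationary cycles gives $\tilde T_0^{-1}\sum_t c_{i,t}^2 = O_p(1)$ for every $i$, while Assumption \ref{asm:error convergence}(ii) gives $\tilde T_0^{-1}\sum_t \hat u_{i,t}^2 = o_p(1)$. The Cauchy–Schwarz inequality then controls every cross term, e.g. $\tilde T_0^{-1}\sum_t c_{i,t}\hat u_{j,t} \le (\tilde T_0^{-1}\sum_t c_{i,t}^2)^{1/2}(\tilde T_0^{-1}\sum_t \hat u_{j,t}^2)^{1/2} = O_p(1)\,o_p(1) = o_p(1)$, and likewise $\tilde T_0^{-1}\sum_t \hat u_{i,t}\hat u_{j,t} = o_p(1)$. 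Hence $\tilde T_0^{-1}\sum_t \hat C_t \hat C_t' = \tilde T_0^{-1}\sum_t C_t C_t' + o_p(1)$ and $\tilde T_0^{-1}\sum_t \hat C_t \hat c_{1,t} = \tilde T_0^{-1}\sum_t C_t c_{1,t} + o_p(1)$.

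Next I would evaluate the limits of the clean moments. Substituting $c_{i,t} = \lambda_i f_t + \varepsilon_{i,t}$ and invoking Assumption \ref{asm:c-factor}(iii), the boundedness in (i), the mutual independence of the $\varepsilon_{i,t}$ across $i$ and $t$, and their orthogonality to the factors, the idiosyncratic cross-products and the factor–error products vanish in probability while the own idiosyncratic second moments converge to the $\sigma_i^2$; this yields
\begin{align*}
\tilde T_0^{-1}\sum_t C_t C_t' \overset{p}{\rightarrow} \Lambda \Sigma_{F_0}\Lambda' + D_\varepsilon =: Q,
\qquad
\tilde T_0^{-1}\sum_t C_t c_{1,t} \overset{p}{\rightarrow} \Lambda \Sigma_{F_0}\lambda_1' =: q,
\end{align*}
the independence across units ensuring $\tilde T_0^{-1}\sum_t \varepsilon_{i,t}\varepsilon_{1,t}\to 0$ for $i\ge 2$ so that no idiosyncratic term survives in $q$. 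Because $D_\varepsilon$ is positive definite, $Q$ is invertible even when $N>L$ renders $\Lambda\Sigma_{F_0}\Lambda'$ singular, so the continuous mapping theorem delivers $\hat w \overset{p}{\rightarrow} Q^{-1}q =: w^\ast$.

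Finally I would identify the constant. The infeasible-weight relation $\lambda_1 f_t = \sum_i w_i^0 \lambda_i f_t$ is equivalent to $\lambda_1' = \Lambda' w^0$, so $q = \Lambda\Sigma_{F_0}\Lambda' w^0$ and
\begin{align*}
\hat w - w^0 \overset{p}{\rightarrow} w^\ast - w^0 = \big[(\Lambda\Sigma_{F_0}\Lambda' + D_\varepsilon)^{-1}\Lambda\Sigma_{F_0}\Lambda' - I_N\big]\,w^0 =: c,
\end{align*}
a finite constant vector whose $i$-th entry is the claimed $c_i$, namely the familiar attenuation bias induced by treating the noisy $\hat c_{i,t}$ as regressors. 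I expect the main obstacle to be the negligibility step, where Assumption \ref{asm:error convergence}(ii) must be combined with a stationarity-based $O_p(1)$ bound on the cycle second moments; a secondary point worth flagging is that when $N>L$ the vector $w^0$ solving $\lambda_1'=\Lambda'w^0$ is not unique, so the statement should be read as holding for any fixed admissible $w^0$ with the corresponding $c = w^\ast - w^0$, the limit $w^\ast$ itself being invariant to the choice since $q$ depends only on $\lambda_1$.
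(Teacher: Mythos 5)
Your proposal is correct and takes essentially the same route as the paper's own proof: both read $\hat w$ as the no-intercept least-squares coefficient of $\hat c_{1,t}$ on the control cycles, use Assumption \ref{asm:error convergence}(ii) together with Cauchy--Schwarz to show the filtering errors are negligible in all sample second moments, and then invoke the factor structure and cross-unit independence of the $\varepsilon_{i,t}$ to obtain the attenuation-bias constant $c=-\bigl(\Lambda\Sigma_{F_0}\Lambda'+D_\varepsilon\bigr)^{-1}D_\varepsilon w^0$. The only differences are cosmetic --- you compute $\mathrm{plim}\,\hat w=Q^{-1}q$ directly rather than expanding $\hat w$ around $w^0$ as the paper does, you flag the non-uniqueness of $w^0$ when $N>L$ (which the paper leaves implicit), and your sign on the constant is the careful one, whereas the paper's displayed expansion drops the minus sign on $\varepsilon_{-1}w^0$, which is immaterial since the lemma asserts only convergence to some finite constants.
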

Note that the infeasible synthetic weights $w_i^0$'s are finite given Assumption \ref{asm:c-factor}(i). We follow \cite{abadie2010synthetic} in assuming a perfect pre-treatment fit.

\begin{proof}[Proof of Lemma \ref{lm:w-consistency}]
For convenience, we use $\tilde{T}_0 = T_0 - h - p + 1$ to represent the effective pre-treatment sample size. Let $\hat{C}_{-1}$ denote the $\tilde{T}_0 \times N$ matrix of estimated cyclical components for the $N$ control units prior to $T_0$, and $\hat{C}_{1}$ the corresponding $\tilde{T}_0 \times 1$ vector for the treated unit.
Similarly, let $\Lambda_{-1}$ denote the $N \times L$ matrix of factor loadings for the control units, and $\Lambda_{1}$ the $1 \times L$ vector for the treated unit.
We use $F_0$ to denote the $\tilde{T}_0 \times L$ matrix of common factors before treatment.
Let $\hat{u}_{-1}$ and $\varepsilon_{-1}$ be the $\tilde{T}_0 \times N$ matrices of estimation errors and idiosyncratic shocks, respectively, for the control units; similarly, let $\hat{u}_{1}$ and $\varepsilon_{1}$ be the corresponding $\tilde{T}_0 \times 1$ vectors for the treated unit. $\hat{w}$ and $w$ are $N \times 1$ vectors of the estimated synthetic weights and true weights.

Let $w^0 = (w_2^0, w_3^0, \dots, w_{N+1}^0)'$ be a $N\times 1$ vector of infeasible weights. Given that \begin{align*}
     F_0\Lambda_{-1}'\cdot w^0  = F_0\Lambda_{1}',
\end{align*}
we have
\begin{align*}
    \hat{w} =& (\hat{C}_{-1}'\hat{C}_{-1})^{-1}(\hat{C}_{-1})'\hat{C}_{1} \\
    =& w^0 + (\hat{C}_{-1}'\hat{C}_{-1})^{-1}(\hat{C}_{-1})'(\varepsilon_1 + \hat{u}_1 - \varepsilon_{-1}\cdot w^0 - \hat{u}_{-1}\cdot w^0).
\end{align*}
We first show that $\frac{1}{\tilde{T_0}}\hat{C}_{-1}'\hat{C}_{-1}$ converges to a positive definite matrix.
Use \begin{align*}
    \hat{C}_{-1} = F_0\Lambda_{-1}' + \varepsilon_{-1} + \hat{u}_{-1}, 
\end{align*}
we have
\begin{align*}
\frac{1}{\tilde{T_0}}\hat{C}_{-1}'\hat{C}_{-1} =& \frac{1}{\tilde{T_0}} \Lambda_{-1}F_0'F_0\Lambda_{-1}' + \frac{1}{\tilde{T_0}}\Lambda_{-1}F_0' \varepsilon_{-1}  + \frac{1}{\tilde{T_0}}\Lambda_{-1}F_0 '\hat{u}_{-1} \\
&+ \frac{1}{\tilde{T_0}} \varepsilon_{-1}'F_0\Lambda_{-1}' + \frac{1}{\tilde{T_0}}\hat{u}_{-1}'F_0 \Lambda_{-1}' + \frac{1}{\tilde{T_0}}\epsilon_{-1}'\epsilon_{-1} + \frac{1}{\tilde{T_0}}\hat{u}_{-1}'\hat{u}_{-1} \\
=& \Sigma_{F_0} + \frac{1}{\tilde{T_0}}\Lambda_{-1}F_0 '\hat{u}_{-1} + \frac{1}{\tilde{T_0}}\hat{u}_{-1}'F_0 \Lambda_{-1}' + \Sigma_{\varepsilon_{-1}} + o_p(1) \\
=& \Sigma_{F_0} + \Sigma_{\varepsilon_{-1}} + o_p(1).
\end{align*}
The last convergence holds because Assumption \ref{asm:c-factor}(iii) and Assumption \ref{asm:error convergence} imply that
\begin{align*}
    \left[\frac{1}{\tilde{T_0}}\Lambda_{-1}F_0 '\hat{u}_{-1}\right]_{ij} &= \frac{1}{\tilde{T_0}} \sum_{l=1}^L \lambda_{i,l} \sum_{t= h+p}^{{T_0}}f_{l,t}\hat{u}_{jt} \\
    &\leq \sum_{l=1}^L \lambda_{i,l} \left(\frac{1}{{\tilde{T_0}}} \sum_{t= h+p}^{{T_0}}f_{l,t}^2\right)^{\frac{1}{2}} \left(\frac{1}{{\tilde{T_0}}}\sum_{t= h+p}^{{T_0}}\hat{u}_{jt}^2\right)^{\frac{1}{2}} \\
    &= o_p(1),
\end{align*}
where $[\cdot]_{ij}$ denote the $i$-th row and $j$-th column of a matrix. Next, we show that $\frac{1}{\tilde{T_0}}(\hat{C}_{-1})'(\varepsilon_1 + \hat{u}_1 - \varepsilon_{-1}\cdot w^0 - \hat{u}_{-1}\cdot w^0)$ converges in probability to a constant vector. Rewrite this expression as 
\begin{align*}
& \frac{1}{\tilde{T_0}}\left(\Lambda_{-1}'F_0 + \varepsilon_{-1}' + \hat{u}_{-1}'\right)\left(\varepsilon_1 + \hat{u}_1 - \varepsilon_{-1} w^0 - \hat{u}_{-1}w^0\right) \\
=& \frac{1}{\tilde{T_0}}\Lambda_{-1}'F_0 \left(\varepsilon_1 + \hat{u}_1 - \varepsilon_{-1}w^0 - \hat{u}_{-1}w^0\right)  + \frac{1}{\tilde{T_0}}\left( \varepsilon_{-1}' + \hat{u}_{-1}'\right)\left(\varepsilon_1 + \hat{u}_1 - \varepsilon_{-1}w^0- \hat{u}_{-1}w^0\right) \\
=& o_p(1) + \frac{1}{\tilde{T_0}}\varepsilon_{-1}'\varepsilon_{1} + \frac{1}{\tilde{T_0}}\varepsilon_{-1}'\varepsilon_{-1}w^0 + \frac{1}{\tilde{T_0}}\hat{u}_{-1}'\hat{u}_{1} + \frac{1}{\tilde{T_0}}\hat{u}_{-1}'\hat{u}_{-1}w^0 \\
& + \frac{1}{\tilde{T_0}}\varepsilon_{-1}'\hat{u}_{1} + \frac{1}{\tilde{T_0}}\varepsilon_{-1}'\hat{u}_{-1}w^0 + \frac{1}{\tilde{T_0}}\hat{u}_{-1}'\varepsilon_{1} + \frac{1}{\tilde{T_0}}\hat{u}_{-1}'\varepsilon_{-1}w^0 \\
=& o_p(1) + \Sigma_{\varepsilon_{-1}}w^0 +  \frac{1}{\tilde{T_0}}\hat{u}_{-1}'\hat{u}_{1} + \frac{1}{\tilde{T_0}}\varepsilon_{-1}'\hat{u}_{1} + \frac{1}{\tilde{T_0}}\varepsilon_{-1}'\hat{u}_{-1}w^0 + \frac{1}{\tilde{T_0}}\hat{u}_{-1}'\varepsilon_{1} + \frac{1}{\tilde{T_0}}\hat{u}_{-1}'\varepsilon_{-1}w^0. 
\end{align*}
Recall that this is a $N \times 1$ vector. We focus on the ith element of each term: 
\begin{align*}
    \left[\frac{1}{\tilde{T_0}}\hat{u}_{-1}'\hat{u}_{1}\right]_i = \frac{1}{\tilde{T_0}}\sum_{t = h+p}^{T_0} \hat{u}_{i,t}\hat{u}_{1,t} 
    \leq \left(\frac{1}{\tilde{T_0}}\sum_{t = h+p}^{T_0} \hat{u}_{i,t}^2\right)^{1/2}\left(\frac{1}{\tilde{T_0}}\sum_{t = h+p}^{T_0} \hat{u}_{1,t}^2\right)^{1/2} 
    = o_p(1),
\end{align*}
and
\begin{align*}
    \left[\frac{1}{\tilde{T_0}}\varepsilon_{-1}'\hat{u}_{1}\right]_i = \frac{1}{\tilde{T_0}}\sum_{t = h+p}^{T_0} \varepsilon_{i,t}\hat{u}_{1,t} 
    \leq \left(\frac{1}{\tilde{T_0}}\sum_{t = h+p}^{T_0} \varepsilon_{i,t}^2\right)^{1/2}\left(\frac{1}{\tilde{T_0}}\sum_{t = h+p}^{T_0} \hat{u}_{1,t}^2\right)^{1/2} 
    = o_p(1).
\end{align*}
Similarly, we can show that $\frac{1}{\tilde{T_0}}\varepsilon_{-1}'\hat{u}_{-1}w^0 + \frac{1}{\tilde{T_0}}\hat{u}_{-1}'\varepsilon_{1} + \frac{1}{\tilde{T_0}}\hat{u}_{-1}'\varepsilon_{-1}w^0 = o_p(1)$. Therefore, $$\frac{1}{\tilde{T_0}}(\hat{C}_{-1})'(\varepsilon_1 + \hat{u}_1 - \varepsilon_{-1}\cdot w^0 - \hat{u}_{-1}\cdot w^0) \overset{p}{\rightarrow} \Sigma_{\varepsilon_{-1}}w^0,$$ which concludes the proof. 
\end{proof}

\begin{proof}[Proof of Theorem \ref{thm:unbiasedness}]
    Given any weight vector $(w_{2}, w_3, \cdots, w_{N+1})$, we have 
    \begin{align}\label{eqn:basic cyclical difference}
        c_{1,t} - \sum_{i = 2}^{N+1}w_i \hat{c}_{i,t} &=  \lambda_1 f_t + \varepsilon_{1,t} - \left(\sum_{i = 2}^{N+1}w_i \lambda_i\right)f_t - \sum_{i = 2}^{N+1}w_i\varepsilon_{i,t} - \sum_{i = 2}^{N+1}w_i \hat{u}_{i,t} \nonumber \\
        &= \left( \lambda_1 - \sum_{i = 2}^{N+1}w_i \lambda_i\right)f_t + \varepsilon_{1,t} - \sum_{i = 2}^{N+1}w_i\varepsilon_{i,t} - \sum_{i = 2}^{N+1}w_i \hat{u}_{i,t}. 
    \end{align}
Let \(\hat{c}_i\), \(\varepsilon_i\), and \(\hat{u}_i\) represent \((T_0 - h - p + 1) \times 1\) vectors, where the \(t\)-th elements correspond to \(\hat{c}_{i,t}\), \(\varepsilon_{i,t}\), and \(\hat{u}_{i,t}\), respectively. Define 
$\phi_t = f_t'(F_0'F_0)^{-1}F_0'$ and we have
\begin{align*}
    \phi_t \left(F_0 \lambda_1' + \varepsilon_{1}  - \sum_{i = 2}^{N+1}w_i \hat{c}_{i}\right) &= \phi_t\left[F_0\left( \lambda_1' - \sum_{i = 2}^{N+1}w_i \lambda_i'\right) + \varepsilon_{1} - \sum_{i = 2}^{N+1}w_i\varepsilon_{i} - \sum_{i = 2}^{N+1}w_i \hat{u}_{i}\right] \\
    &= f_t'\left( \lambda_1' - \sum_{i = 2}^{N+1}w_i \lambda_i'\right) + \phi_t \varepsilon_{1}- \phi_t\sum_{i = 2}^{N+1}w_i\varepsilon_{i} \ - \phi_t\sum_{i = 2}^{N+1}w_i \hat{u}_{i}.
\end{align*}
   Rearranging terms yields 
   \begin{align*}
       \left( \lambda_1 - \sum_{i = 2}^{N+1}w_i \lambda_i\right)f_t = \phi_t \left(F_0 \lambda_1'  - \sum_{i = 2}^{N+1}w_i \hat{c}_{i}\right)  + \phi_t\sum_{i = 2}^{N+1}w_i\varepsilon_{i} \ + \phi_t\sum_{i = 2}^{N+1}w_i \hat{u}_{i},
   \end{align*}
   which is then substituted into equation (\ref{eqn:basic cyclical difference}) to obtain the following:
   \begin{align*}
       c_{1,t} - \sum_{i = 2}^{N+1}w_i \hat{c}_{i,t} =&\phi_t \left(F_0 \lambda_1'  - \sum_{i = 2}^{N+1}w_i \hat{c}_{i}\right)   + \phi_t\sum_{i = 2}^{N+1}w_i\varepsilon_{i} \ + \phi_t\sum_{i = 2}^{N+1}w_i \hat{u}_{i} + \varepsilon_{1,t} - \sum_{i = 2}^{N+1}w_i\varepsilon_{i,t} - \sum_{i = 2}^{N+1}w_i \hat{u}_{i,t} \\
       =& \phi_t\left(F_0 \lambda_1'  - F_0\sum_{i = 2}^{N+1}w_i \lambda_i' \right) + \varepsilon_{1,t} - \sum_{i = 2}^{N+1}w_i\varepsilon_{i,t} - \sum_{i = 2}^{N+1}w_i \hat{u}_{i,t}.
   \end{align*}
   For convenience, we follow \cite{abadie2010synthetic} in assuming a perfect pre-treatment fit, that is, the minimization problem in (\ref{eqn:cycle-weights}) attains zero, and $ \hat{c}_{1} = \sum_{i=2}^{N+1}\hat{w}_i\hat{c}_{i}$,
    so
    \begin{align*}
        F_0 \lambda_1'  - F_0\sum_{i = 2}^{N+1}w_i \lambda_i' = \sum_{i = 2}^{N+1}w_i(\varepsilon_{i}+\hat{u}_i)-\varepsilon_1 - \hat{u}_1.
    \end{align*}
Therefore, 
\begin{align*}
    c_{1,t} - \sum_{i = 2}^{N+1}\hat{w}_i \hat{c}_{i,t} &= \phi_t\sum_{i = 2}^{N+1}\hat{w}_i\hat{u}_i - \phi_t\hat{u}_1 - \sum_{i = 2}^{N+1}\hat{w}_i \hat{u}_{i,t} + \phi_t\sum_{i = 2}^{N+1}\hat{w}_i\varepsilon_i- \phi_t\varepsilon_1   - \sum_{i = 2}^{N+1}\hat{w}_i(\varepsilon_{i,t}-\varepsilon_{1,t})  \\
    &\equiv R_{1t} + R_{2t} + R_{3t} + R_{4t} + R_{5t} + R_{6t}.
\end{align*}
Using Lemma \ref{lm:w-consistency}, we rewrite $R_{1t}$ as
\begin{align*}
    R_{1t} &= \sum_{i = 1}^{N+1} \hat{w}_i \left[\sum_{s=h+p}^{T_0} f_t'\left(F_0'F_0\right)^{-1}f_s \hat{u}_{i,s}\right]\\
    &= \sum_{i = 1}^{N+1} \left(w_i^0 + c_i + o_p(1)\right)\left[\sum_{s=h+p}^{T_0} f_t'\left(F_0'F_0\right)^{-1}f_s \hat{u}_{i,s}\right].
\end{align*}
Following Appendix B of \cite{abadie2010synthetic}, we can apply the Cauchy-Schwarz inequality and Assumption \ref{asm:c-factor} to obtain the following bound:
\begin{align} \label{ineq:cauchy of phi}
    \left(f_t' \left(F_0'F_0\right)^{-1}f_s \right)^2 &\leq \left(f_t' \left(F_0'F_0\right)^{-1}f_t \right)\left(f_s' \left(F_0'F_0\right)^{-1}f_s \right) \leq \left(\frac{\overline{F}^2 L}{\left(T_0 - h-p+1\right)\underline{\xi}}\right)^2.
\end{align}
Using Assumption \ref{asm:error convergence} and applying the Cauchy-Schwarz inequality to $R_{1t}$, we have
    \begin{align*}
        R_{1t}^2 & \leq \frac{\left(\overline{F}^2 L/\underline{\xi}\right)^2}{T_0 - h-p+1}\left(\sum_{i = 1}^{N+1} \left(w_i^0 + c_i + o_p(1)\right)^2 \right)  \left(\sum_{i = 1}^{N+1}\sum_{s=h+p}^{T_0} \hat{u}_{i,s} ^2 \right) = o_p\left(T_0\right)/T_0 = o_p(1),
    \end{align*}
which implies that $R_{1t} = o_p\left(1\right)$.
Similarly, we can show that $R_{2t} = o_p\left(1\right)$. With Lemma \ref{lm:w-consistency} and the convergence of $\hat{u}_{i,t}$ by Assumption \ref{asm:error convergence}, 
\begin{align*}
        R_{3t} = \sum_{i = 1}^{N+1}\left(w_i^0 + c_i + o_p(1)\right)\hat{u}_{i,t} = o_p(1).
    \end{align*}
Note that if  the cyclical component follows an exact low-dimensional
factor structure $c_{i,t} = \lambda_i'f_t$, then 
$$c_{1,t} - \sum_{i = 2}^{N+1} \hat{w}_i\hat{c}_{i,t} = R_{1t}+R_{2t}+R_{3t} = o_p(1),$$
which proves part (ii) of Theorem \ref{thm:unbiasedness}. 

We then study $ R_{4t}+R_{5t}+R_{6t}$ to complete the proof of part (i) of Theorem \ref{thm:unbiasedness}. 
The upper bound of the bias of $\vert R_{4t} \vert$ was analyzed in \cite{abadie2010synthetic}. We can rewrite $R_{4t}$ as
\begin{align*}
    R_{4t} & = \sum_{i = 2}^{N+1} \hat{w}_i \sum_{s = h+p}^{T_o} f_t'\left(F_0'F_0\right)^{-1}f_s \varepsilon_{i,s} \equiv \sum_{i = 2}^{N+1} \hat{w}_i \overline{\varepsilon}_{i,t},
\end{align*}
where $\overline{\varepsilon}_{i,t} = f_t'\left(F_0'F_0\right)^{-1}\sum_{s = h+p}^{T_o} f_s \varepsilon_{i,s} $.
Using the Cauchy-Schwarz inequality and Assumption \ref{asm:c-factor},
\begin{align*}
    \mathbb{E}\left[\sum_{i = 2}^{N+1} \vert {w}_i^0\overline{\varepsilon}_{i,t}\vert\right] \leq \overline{w}\sum_{i = 2}^{N+1}\left(\mathbb{E}\left[ \overline{\varepsilon}_{i,t}^2\right] \right)^{1/2}, 
\end{align*}
where $\overline{w}^0 \geq\vert w_i^0 \vert $ is some constant bound for $w^0$. 
Then, using the Rosenthal’s inequality to the second term above, we obtain
\begin{align*}
    \mathbb{E}[\overline{\varepsilon}_{i,t}^2] &\leq 
    \bar{F}^4\mathbb{E}\left[ \left(\sum_{s = h+p}^{T_0}\mathbf{1}'\left(F_0'F_0\right)^{-1}\mathbf{1}\varepsilon_{i,s} \right)^2\right] \\
    &\leq \left(\frac{\overline{F}^2 L}{\left(T_0 - h-p+1\right)\underline{\xi}}\right)^2 \mathbb{E}\left[\left(\sum_{s = h+p}^{T_0} \varepsilon_{i,s}  \right)^2\right]\\
    & \leq C\left(\frac{\overline{F}^2 L}{\left(T_0 - h-p+1\right)\underline{\xi}}\right)^2 \left(\sum_{s=h+p}^{T_0} \mathbb{E}[\varepsilon_{i,s}^2]\right),
\end{align*}
where $C$ is a constant independent of $T_0$, see \cite{ibragimov2002exact}. 
Collecting the above steps, we have verified that the following absolute moment goes to zero:
\begin{align*}
   \mathbb{E}\left[\sum_{i = 2}^{N+1} \vert {w}_i^0\overline{\varepsilon}_{i,t}\vert\right] \leq  \overline{w}^0\left(\mathbb{E}\left[\sum_{i = 2}^{N+1}  \overline{\varepsilon}_i^2\right] \right)^{1/2} \leq& \sqrt{C}\frac{\overline{F}^2 L}{\left(T_0 - h-p+1\right)\underline{\xi}}\sqrt{\sum_{i = 2}^{N+1}\sum_{s=h+p}^{T_0} \mathbb{E}[\varepsilon_{i,s}^2]}\\
   =& O\left(\frac{1}{T_0 - h-p+1}\right).
\end{align*}
By the Markov inequality,  $R_{4t} \overset{p}{\rightarrow} 0$.
The consistency of $R_{5t}$ is similar except that it does not involve $\hat{w}_i$'s. Hence,
\begin{align*}
    \hat{Y}_{1,t}(0) - Y_{1,t}(0) = o_p(1) + R_{6t}. 
\end{align*}

Finally, because $R_{6t}$ is a weighted average of idiosyncratic errors in the post-treatment period $t$, $\varepsilon_{i,t}$ are independent of the weights $\hat{w}_{i}$ determined by the pre-treatment potential outcomes. Therefore, we can conclude that $\mathbb{E}[R_{6t}] = 0$, and $\hat{Y}_{1,t}(0)$ is asymptotically unbiased. 
\end{proof}

\begin{proof} [Proof of Lemma \ref{lm:c-stationary}]
    See Proposition 4 of \cite{hamilton2018you}. 
\end{proof}

\begin{proof} [Proof of Lemma \ref{lm:hamilton-filter-consistency}]
    By Proposition 4 of \cite{hamilton2018you}, we can obtain the consistency of $(\hat{\alpha}_{i,0},\cdots,\hat{\alpha}_{i,p})$, which implies the consistency of  $(\hat{\tau}_{i,t},\hat{c}_{i,t})$ for all $1 \leq i \leq N+1, h+p \leq t \leq T_0$. For a post-treatment $T_0+1 \leq t \leq T_0 + h$, we have
    \begin{align*}
        \hat{\tau}_{1,t} & = \hat{\alpha}_{1,0} + \hat{\alpha}_{1,1} Y_{1,t-h-1}(0) + \cdots + \hat{\alpha}_{1,p} Y_{1,t-h-p+1}(0) \\
        & \overset{p}{\rightarrow} \alpha_{1,0} + \alpha_{1,1} Y_{1,t-h-1}(0) + \cdots + \alpha_{1,p} Y_{1,t-h-p+1}(0) = \tau_{1,t}
    \end{align*}
    because of the Slutsky theorem. Similarly, we have $\hat{u}_{i,t} = o_p(1)$ for any $t \leq T_0 + h$ and $i=2,\cdots,N+1$ for all $i$ and $t$ under consideration.
\end{proof}

\section{Additional Empirical Results} \label{sec:additional-empirical}

We present additional results to supplement the analysis in Section \ref{sec:HongKong}, conducting the exercise with the donor pool used by \citet{hsiao2012panel}: mainland China, Indonesia, Japan, Korea, Malaysia, the Philippines, Singapore, Thailand, and the United States. 

Figure \ref{fig:HK_robustness_gdp_trend_cyc} highlights two features of this pool. First, this set includes economies whose per‑capita GDP is substantially lower than that of Hong Kong. Second, several donors exhibit sharp GDP movements around 1997, suggesting they were themselves affected by events coinciding with Hong Kong’s return and could therefore confound the counterfactual. Figure \ref{fig:hsiao-treatment-effect}(b) is also indicative of the latter concern, as the synthetic trajectory almost overlays Hong Kong’s actual post‑return path. The placebo tests in Figure \ref{fig:hsiao-placebo} deliver qualitatively similar patterns to those in Section \ref{sec:HongKong}, with the synthetic business cycle estimator outperforming the conventional one.

\begin{figure}[htbp!]
  \centering
  \begin{subfigure}{0.5\textwidth}
    \centering
\includegraphics[width=\textwidth]{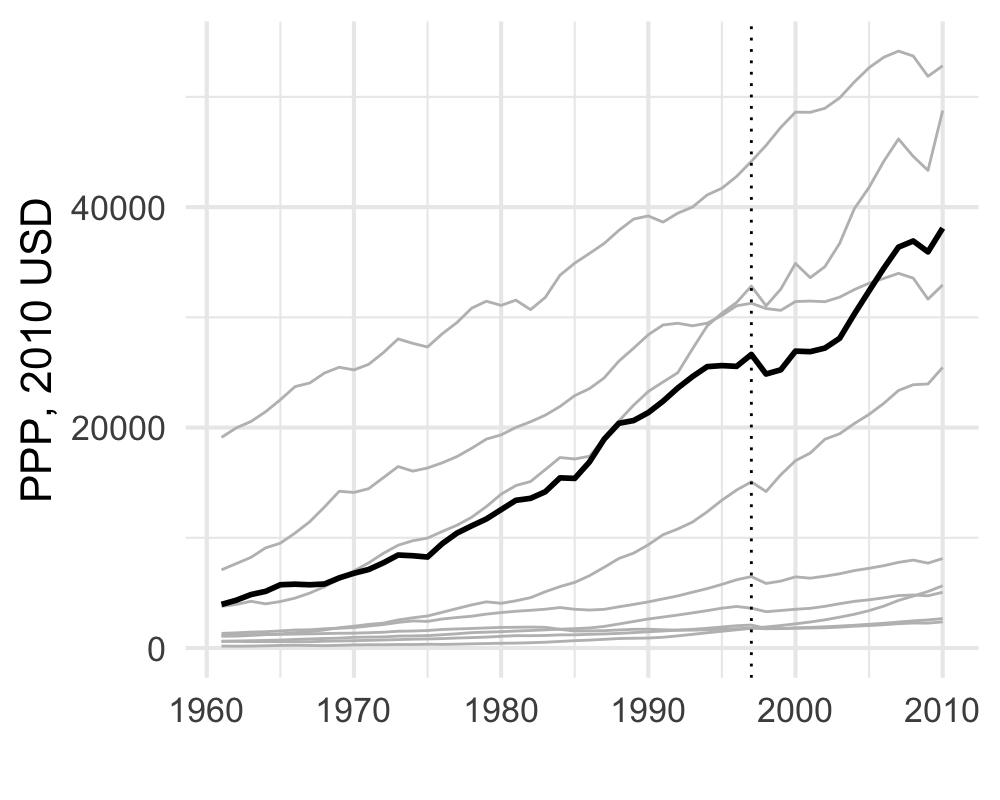}
    \caption{Raw per capita GDP}
  \end{subfigure}


  \begin{subfigure}{0.49\textwidth}
    \centering
 \includegraphics[width=\textwidth]{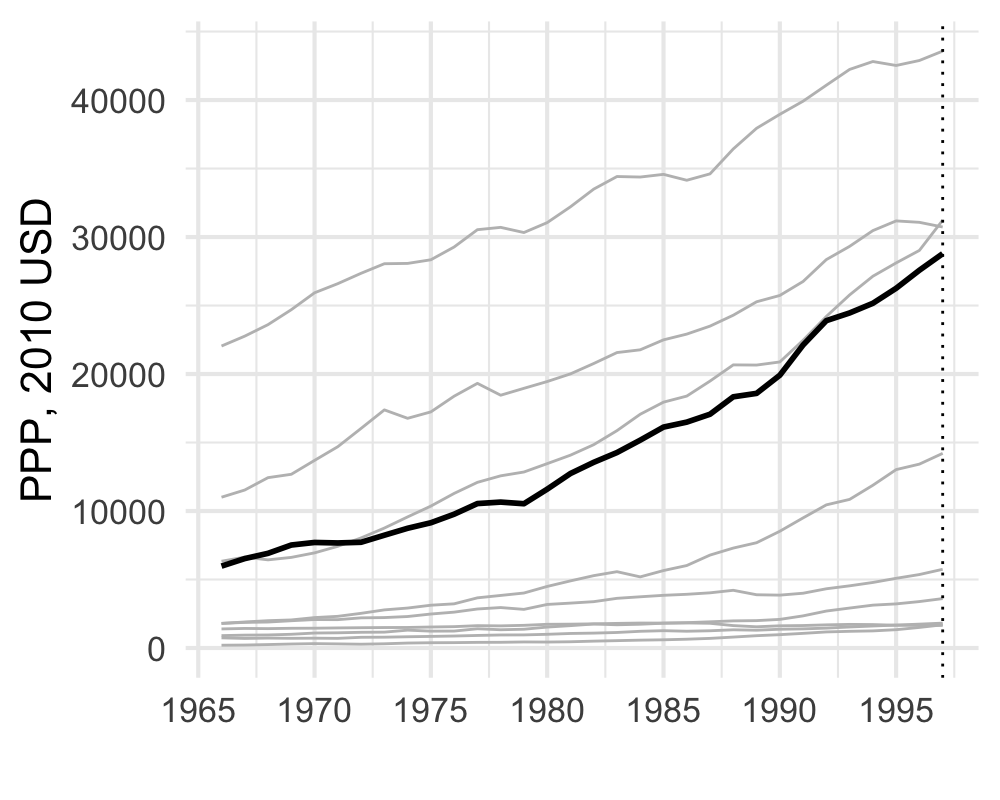}
    \caption{Trend components}
  \end{subfigure}
  \hfill
  \begin{subfigure}{0.49\textwidth}
    \centering
\includegraphics[width=\textwidth]{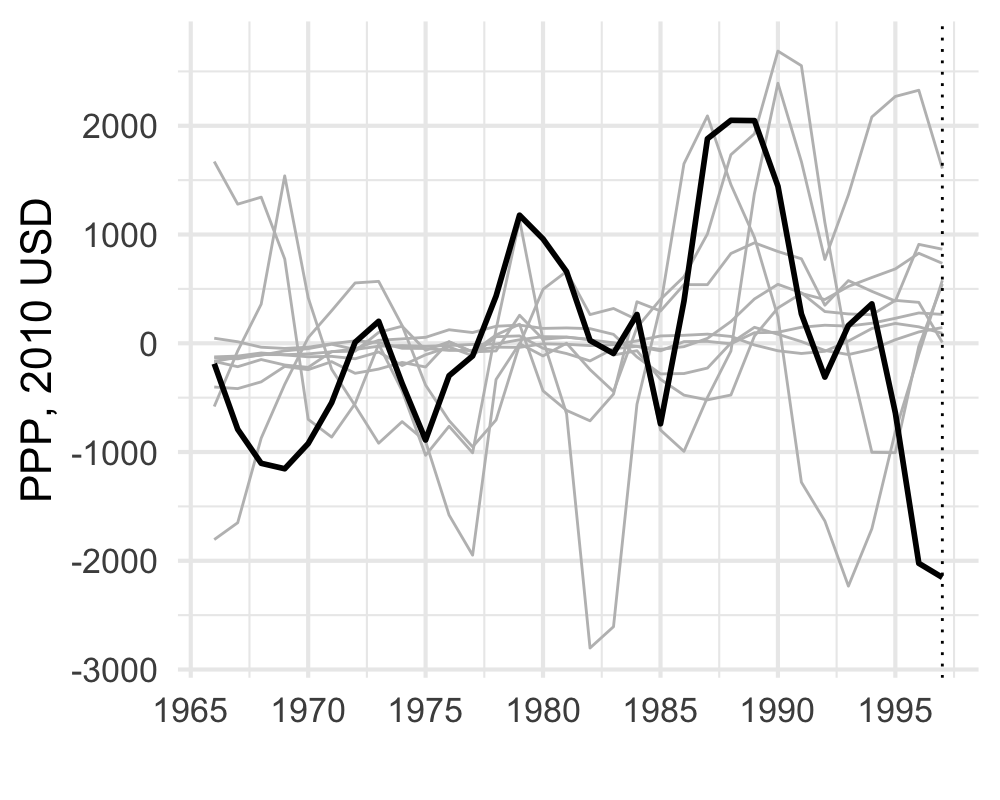}
    \caption{Cyclical components}
  \end{subfigure}
  \caption{GDP path decomposition for Hong Kong and 9 donor economies}
  \label{fig:HK_robustness_gdp_trend_cyc}
  \caption*{\footnotesize The thick black curve represents Hong Kong's GDP per capita, while the grey curves correspond to the 9 donor economies analyzed in \cite{hsiao2012panel}: mainland China, Indonesia, Japan, Korea, Malaysia, Philippines, Singapore, Thailand, and the US. Panel (a) shows the raw data; panels (b) and (c) display the trend and cyclical components, respectively, extracted with the Hamilton filter applied to the pre-unification data, using horizon $h=4$ years and $p=2$ lags.
}
\end{figure}

\begin{figure}[htbp!]
  \centering
  \begin{subfigure}{0.49\textwidth}
    \includegraphics[width=\textwidth]{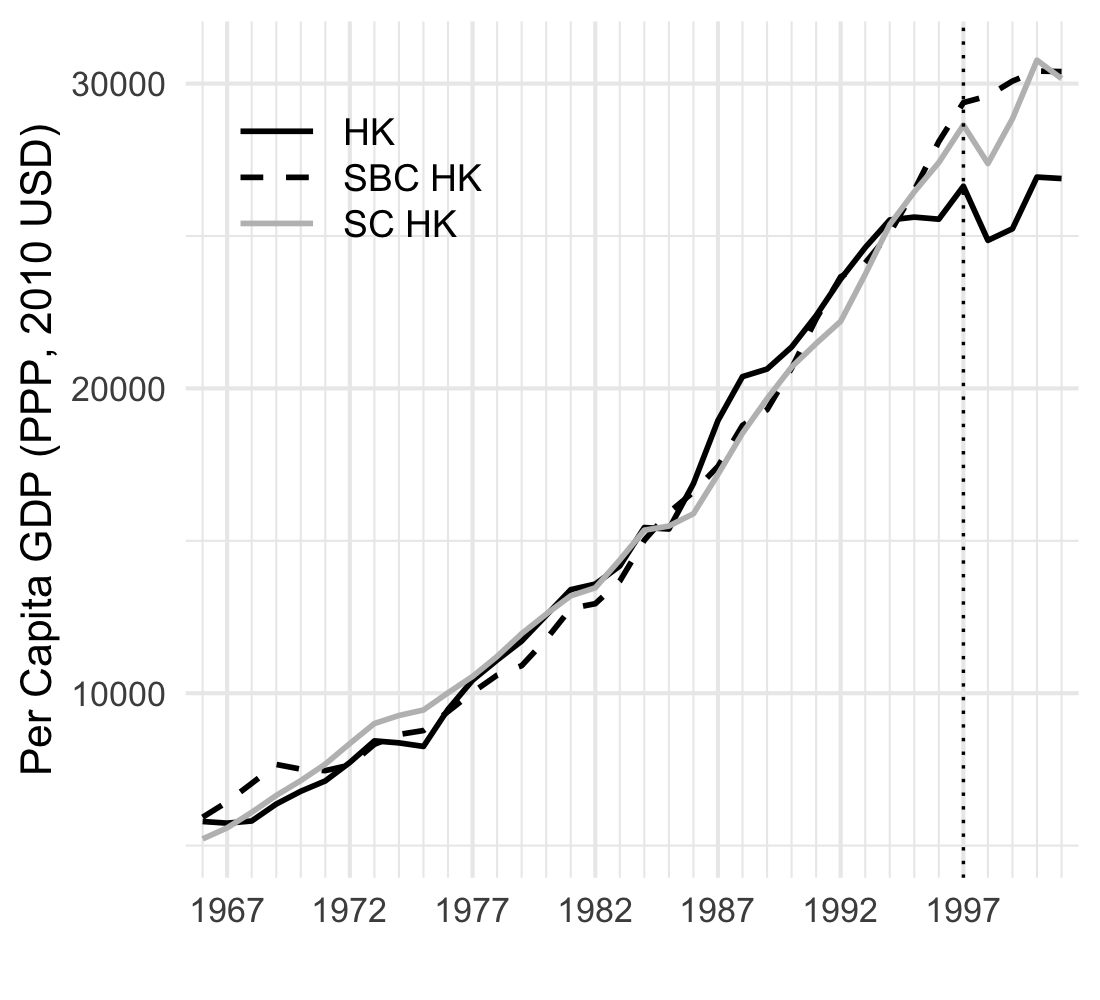}
    \caption{Non-negative weights}
  \end{subfigure}
  \hfill
  \begin{subfigure}{0.49\textwidth}
    \includegraphics[width=\textwidth]{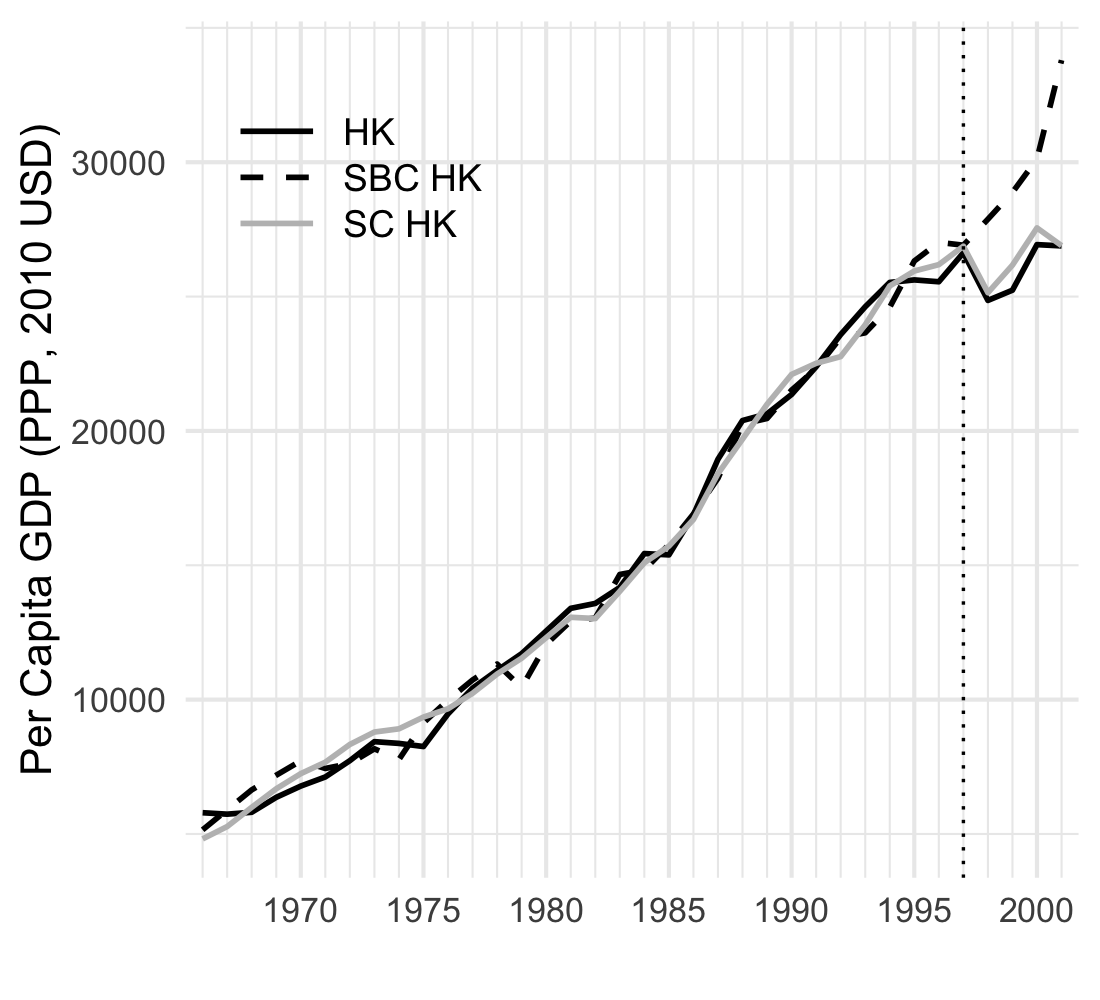}
    \caption{Signed weighted}
  \end{subfigure}
  \caption{Robustness check: treatment effect estimates of return of Hong Kong}
  \label{fig:hsiao-treatment-effect}
  \caption*{\footnotesize Hong Kong's synthetic GDP estimated using both the synthetic business cycle estimator and the conventional synthetic control estimator, compared with the actual GDP. Panels (a) and (b) impose and relax the non-negativity constraint on the weights, respectively. The estimates are obtained using the donor economies in Figure \ref{fig:HK_robustness_gdp_trend_cyc}.}
\end{figure}

\begin{figure}[htbp!]
  \centering
  \begin{subfigure}{0.49\textwidth}
    \includegraphics[width=\textwidth]{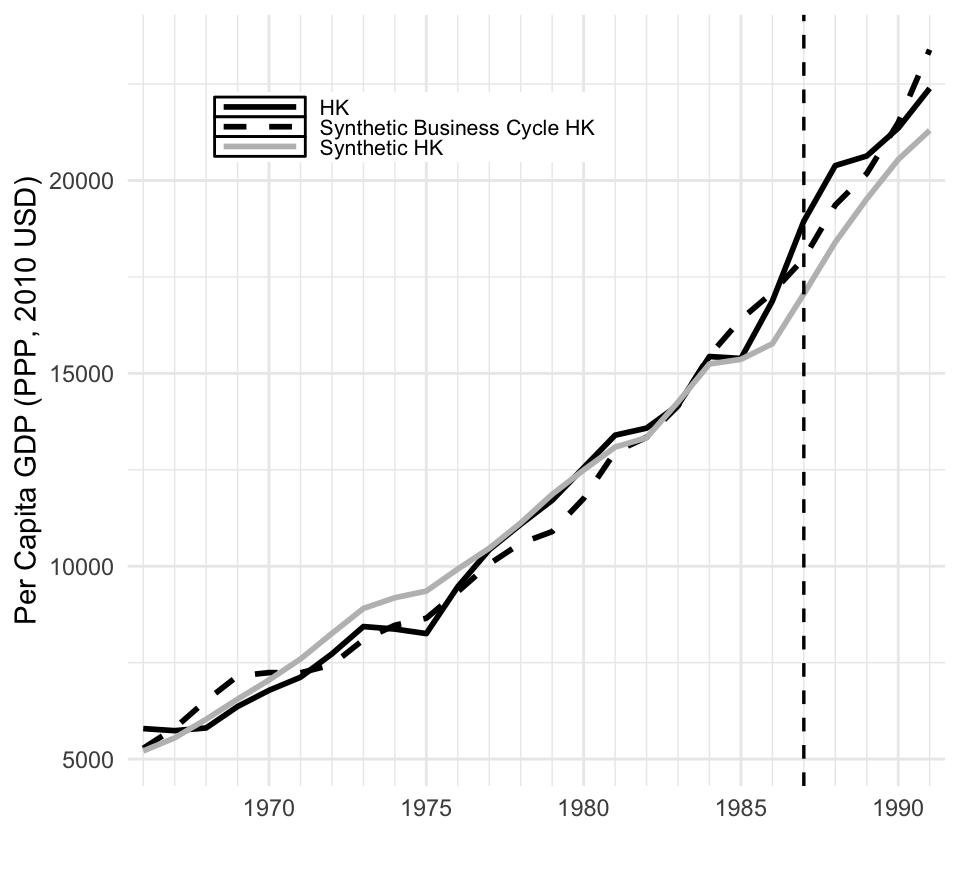}
    \caption{Non-negative weights}
  \end{subfigure}
  \hfill
  \begin{subfigure}{0.49\textwidth}
    \includegraphics[width=\textwidth]{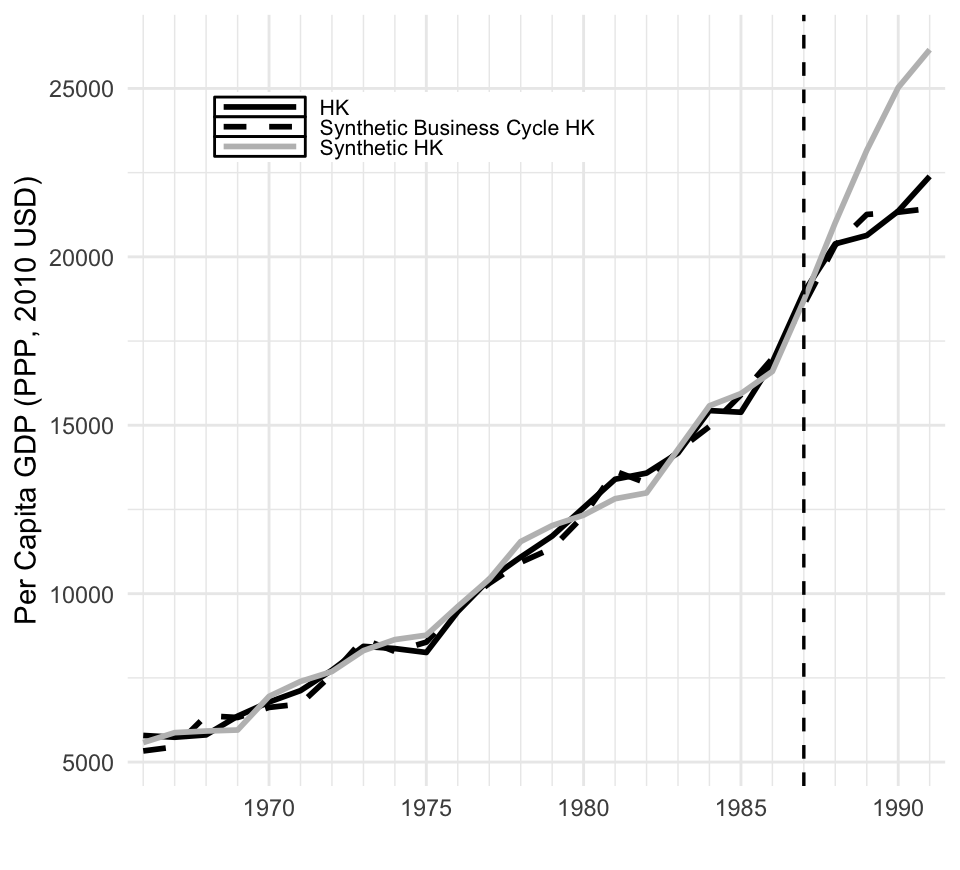}
    \caption{Signed weights}
  \end{subfigure}
  \caption{Robustness check: placebo return of Hong Kong in 1987}
  \label{fig:hsiao-placebo}
  \caption*{\footnotesize This figure applies the synthetic business cycle and the conventional synthetic control method to the placebo return of Hong Kong dated 1987. Panels (a) and (b) impose and relax the non-negativity constraint on the weights, respectively. The estimates are obtained using the donor economies in Figure \ref{fig:HK_robustness_gdp_trend_cyc}.}
\end{figure}

\end{document}